\newtheorem{theorem}{Theorem}
\newcommand{\RNum}[1]{\uppercase\expandafter{\romannumeral #1\relax}}
\newcounter{mytempeqncnt3}
\algnewcommand\INPUT{\item[\textbf{Input:}]}
\algnewcommand\OUTPUT{\item[\textbf{Output:}]}
\algnewcommand\One{\item[\textbf{1.}]}
\algnewcommand\Two{\item[\textbf{2.}]}
\algnewcommand\Three{\item[\textbf{3.}]}
\algnewcommand\VarOne{\item[~~~~{1.1.}]}
\algnewcommand\VarTwo{\item[~~~~{1.2.}]}
\algnewcommand\VarThree{\item[~~~~{1.3.}]}
\algnewcommand\VarFour{\item[~~~~{1.4.}]}
\algnewcommand\VarFive{\item[~~~~{1.5.}]}
\algnewcommand\VarOneTwo{\item[~~~~{3.1.}]}
\algnewcommand\VarTwoTwo{\item[~~~~{3.2.}]}
\algnewcommand\VarThreeTwo{\item[~~~~{3.3.}]}
\algnewcommand\Four{\item[\textbf{4.}]}
\algnewcommand\VarFourOne{\item[~~~~{4.1.}]}
\algnewcommand\VarFourTwo{\item[~~~~{4.2.}]}
\algnewcommand\VarFourTwoOne{\item[~~~~~~{4.4.1.}]}
\algnewcommand\VarFourTwoTwo{\item[~~~~~~{4.4.2.}]}
\algnewcommand\VarFourThree{\item[~~~~{4.3.}]}
\algnewcommand\VarFourFour{\item[~~~~{4.4.}]}
\algnewcommand\VarFourFive{\item[~~~~{4.5.}]}
\algnewcommand\Five{\item[\textbf{5.}]}
\begin{document}


\title{Cognitive RF--FSO Fronthaul Assignment in Cell-Free and User-Centric mMIMO Networks}

\author{\IEEEauthorblockN{Pouya Agheli\thanks{P. Agheli, M. J. Emadi, and H. Beyranvand are with the Department of Electrical Engineering, Amirkabir University of Technology (Tehran Polytechnic), Tehran, Iran (E-mails: \{pouya.agheli, mj.emadi, beyranvand\}@aut.ac.ir).},
Mohammad Javad Emadi, and
Hamzeh Beyranvand
}} 


\maketitle

\begin{abstract}
\textcolor{black}{
Cell-free massive MIMO (CF-mMIMO) network and its user-centric (UC) version are considered as promising techniques for the next generations of wireless networks. However, fronthaul and backhaul assignments are challenging issues in these networks. In this paper, energy efficiencies of uplink transmission for the CF- and UC-mMIMO networks are studied, wherein access points (APs) are connected to aggregation nodes (ANs) through radio frequency (RF) and/or free-space optic (FSO) fronthauls, and the ANs are connected to a central processing unit via fiber backhauls. The achievable data rates are derived by taking into account the effects of hardware non-ideality at the APs and ANs, FSO alignment and weather conditions. To have a robust and energy-efficient network, especially in the presence of FSO misalignment and adverse weather conditions, firstly, a cognitive RF--FSO fronthaul assignment algorithm is proposed at the cost of sharing the available RF bandwidth between the access and fronthaul links. Then, optimal power allocations at the users and APs are investigated, and two analytical approaches are proposed to solve the non-convex optimization problem. Through numerical results, we have discussed how utilizing the cognitive RF--FSO fronthaul assignment achieves higher energy efficiency compared to that of FSO-only, RF-only, or simultaneously using RF and FSO fronthaul links, e.g., achieving up to $198\%$ higher energy efficiency under unfavorable conditions. Moreover, the effects of FSO misalignment, weather conditions, and power allocations on the performances of the CF- and UC-mMIMO networks are discussed. 
}
\end{abstract}

\begin{IEEEkeywords}
Cell-free massive MIMO, user-centric massive MIMO, RF and FSO fronthaul, fiber backhaul, uplink achievable data rates, energy efficiency.
\end{IEEEkeywords}

\IEEEpeerreviewmaketitle

\section{Introduction}
\lettrine{K}{ey} features of novel technologies for upcoming wireless networks are to answer the immediate increase of mobile users with high data rates, low latency, and energy-efficient connection requirements. To this end, a massive multiple-input multiple-output (mMIMO) network has been proposed to provide noticeable improvements in spectral and energy efficiencies by applying near-optimal linear processing thanks to the weak law of large numbers \cite{marzetta2010noncooperative}. In the mMIMO network, it is sufficient to acquire channel state information (CSI) at base stations, and due to the channel hardening and uplink/downlink reciprocity in a time-division duplexing manner, each user only needs to know the statistical averages of effective channels. Thus, there is no pilot transmission overhead in the downlink \cite{ marzetta2016fundamentals}. Various modifications of the mMIMO network have been introduced in the literature \cite{marzetta2016fundamentals,khormuji2015generalized,ngo2017cell,kabiri2017optimal}. In particular, a cell-free mMIMO (CF-mMIMO) network, which inherits the advantages of both distributed and mMIMO networks, has been introduced to serve users with almost the same high spectral efficiency and coverage support by applying the maximum-ratio detection at access points (APs) \cite{ngo2017cell}. In the CF-mMIMO setup, the APs simultaneously serve all user equipment (UEs). Since the access links between the APs and UEs are shorter, and the UEs experience almost-surely better channels, the CF-mMIMO network provides higher data rates and wider coverage support than the cellular one. In this case, \cite{ngo2017cell} has shown that the CF-mMIMO network outperforms the small-cell one with about $5$-fold and $10$-fold improvement in $95\%$-likely per-user throughput, under uncorrelated and correlated shadowing, respectively.

The CF-mMIMO networks have been investigated from various perspectives. A max-min transmission power control mechanism that ensures uniformly good service within a coverage area has been proposed in \cite{ngo2017cell}. Similarly, a power control mechanism for pilot transmission has been studied in \cite{mai2018pilot}, which minimizes the channel estimation's mean-squared error. In addition, a user-centric mMIMO (UC-mMIMO) network, wherein each user merely connects to its nearby APs, has been introduced and analyzed in \cite{buzzi2017cell}. It has been shown that the UC-mMIMO network provides higher per-user data rates in comparison to the CF-mMIMO one with less fronthaul and backhaul overhead \cite{buzzi2017user} and \cite{buzzi2019user}. Moreover, \cite{ngo2017total} and \cite{yang2018energy} have analyzed the energy efficiency and total power consumption models at the APs and \emph{ideal} backhaul links in the CF-mMIMO network. Furthermore, in \cite{masoumi2019performance} and \cite{bashar2019max}, spectral and energy efficiencies have been analyzed for \emph{limited-capacity} fronthaul CF-mMIMO networks, and the effects of quantization on the network's performance have been studied.

On the other hand, the fronthaul and backhaul links are mainly deployed by using two well-known optical technologies; fiber and free-space optic (FSO). The fiber communication technology provides high data rates and low path-loss, which comes at the disadvantages of high deployment costs and digging problems. Conversely, the FSO technology offers high enough data rates with much lower deployment cost, rapid setup time, easy upgrade, flexibility, and freedom from spectrum license regulations. However, it comes at the expense of some drawbacks such as pointing error, the requirement of a good-enough line-of-sight (LOS) connection, and sensitivity to weather conditions \cite{hassan2017statistical,khalighi2014survey,kaushal2016optical}. To conquer the outage problem of FSO links in adverse weather conditions, combined radio frequency (RF) and FSO, namely hybrid RF--FSO, and buffer-aided RF--FSO solutions have been suggested \cite{douik2016hybrid,touati2016effects,chen2016multiuser,jamali2016link,najafi2017optimal, hassan2017statistical }. \cite{najafi2017c} has presented a cloud-radio access network (C-RAN) with RF access and hybrid RF--FSO fronthaul links in which RF-based fronthaul and access links exploit the same frequency band with an optimized time-division mechanism. Furthermore, optimal power allocations and FSO fronthaul selections in cloud small-cell millimeter-wave networks have been proposed in \cite{hassan2019joint}. 

To have a lower-complexity signal processing and skip complete channel decoding and re-encoding at a relay node, instead of utilizing the well-known decode-and-forward relaying scheme, one can use the classical amplify-and-forward (AF) scheme. Besides, for converting the received signal at a relay to another domain, e.g., radio-over-FSO or FSO-over-\emph{fiber}, a practical clipping model could be used \cite{ahmed2018c, ansari2013impact,anees2015performance,soleimani2015generalized}. In practice, communications suffer from various analog and digital non-idealities, such as phase noise, impedance mismatch, I/Q imbalance, and quantization \cite{masoumi2019performance}. Despite investing in high-quality devices and applying more complex signal processing, some non-negligible errors still remain. These errors are called hardware impairment (HI) \cite{studer2010mimo}. Therefore, to propose a cost-efficient mMIMO network, analyzing its performance, subject to the HI model, has gained research interest \cite{bjornson2014massive,zhang2018performance,bjornson2015massive,zhang2017spectral}.

To the best of our knowledge, the CF- and UC-mMIMO networks' performances with two layers of optical fronthaul and backhaul links, subject to different hardware models, FSO alignment and weather conditions, have not been investigated in the literature. In this paper, we study the uplink transmissions of the CF- and UC-mMIMO wireless networks in which RF and FSO fronthaul links connect distributed APs to aggregation nodes (ANs), and fiber backhaul links connect the ANs to a central processing unit (CPU). The clipping and HI models are also used to model the reformations of the radio and optical signals at the APs and ANs. The main contributions of the paper are summarized as follows.
\begin{itemize}
    \item Closed-form uplink achievable data rates are derived by employing maximum-ratio-combining (MRC) and use-and-then-forget (UatF) techniques for the CF- and UC-mMIMO networks.
    \item Optimal power allocations at the UEs and APs are proposed to maximize energy efficiencies of the CF- and UC-mMIMO networks, subject to maximum transmission and consumed powers at the UEs and APs. To overcome the non-convexity of the optimization problem, we suggest two solutions based on the geometric programming (GP) and weighted minimum mean-square error (W-MMSE) approaches.
    \item A cognitive fronthaul assignment algorithm is proposed to maximize the energy efficiencies of the CF- and UC-mMIMO networks under FSO misalignment and adverse weather conditions. To this end, an AP cognitively decides to transmit data over its FSO-only, RF-only, or RF--FSO fronthaul. For the case of using RF bands in the fronthaul links, e.g., RF-only and RF--FSO, the RF bandwidth is shared between the access and fronthaul. 
    \item Through numerical results, the performances of the CF- and UC-mMIMO networks and the advantages of the optimal power allocations and cognitive fronthaul assignments are investigated. It is shown that the UC-mMIMO network outperforms the CF-mMIMO one from the spectral and energy efficiency viewpoints. It is also concluded that the optimally allocating the UEs' and APs' transmission powers improves the networks' performances compared to full power allocations. Likewise, the networks with cognitively-assigned fronthaul links offer higher energy efficiencies than that of with FSO-only, RF-only, and simultaneously using both RF and FSO links (RF\&FSO), in unfavorable FSO alignment and weather conditions.
\end{itemize}

\textit{Organization}: Section \ref{Sec:Sec2} introduces the CF- and UC-mMIMO networks with FSO and RF fronthaul and fiber backhaul links.
Channel training and data transmission are represented in Section \ref{Sec:Sec3}. Uplink achievable data rates are derived in Section \ref{Sec:Sec4}. Section \ref{Sec:Sec5} presents the power allocations and cognitive fronthaul assignments, and Section \ref{Sec:Sec6} represents numerical results and discussions. Finally, the paper is concluded in Section \ref{Sec:Sec7}.

\textit{Notation}: $\bm{x} \! \in \! \mathbb{C}^{n \times 1}$ represents a vector in an $n$-dimensional complex space, $\mathbb{E}\{\cdot\}$ is the expectation operator, and $[\,\cdot\,]^T$ stands for the transpose. Also, $\operatorname{erf}(.)$ and $\operatorname{erfc}(.)$ indicate the error and complementary error functions, respectively. Moreover, $y \!\sim\! \mathcal{N}(m,\sigma^2)$ and $z \!\sim\! \mathcal{CN}(m,\sigma^2)$ sequentially denote real-valued and complex symmetric Gaussian random variables (RVs) with mean $m$ and variance $\sigma^2$.

\section{System Model} \label{Sec:Sec2}
We consider a distributed mMIMO network wherein $M$ number of $N$-antenna APs simultaneously serve $K$ single-antenna UEs at the same time--frequency resources. The users are distributed over a wide area with different regions, e.g., urban and rural areas, industrial estates, and crowded places, c.f. Fig.~\ref{Fig:Fig.1}. To cover the whole area, it is also assumed that there exists $A$ ANs to gather data from all the APs via RF--FSO links, namely the \emph{fronthaul} links. Every AP is equipped with a single RF fronthaul antenna and an FSO aperture for sending signals to the serving AN. Ultimately, each AN re-transmits its received signals to the CPU via a multi-core fiber (MCF) link, called the \emph{backhaul} link. To analyze the performance of the proposed network, we consider two setups; (a) cell-free network and (b) user-centric one. Despite the cell-free setup in which an AP serves all UEs, for the user-centric case, each AP only serves a portion of its adjacent UEs. Moreover, to acquire channel state information (CSI), each coherence time interval is divided into two phases; uplink channel training and data transmission, where UE--AP access and AP--AN fronthaul links are estimated at the APs and ANs, respectively.

\begin{figure}[!t]
    \centering
    \hspace{-0.465cm}
    \subfloat{
    \pstool[scale=0.202]{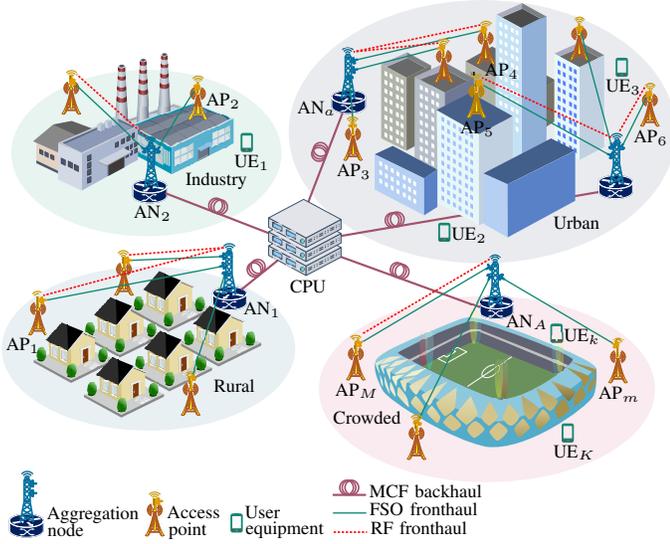}{
    \psfrag{Industry}{\hspace{-0.15cm}\scriptsize Industry}
    \psfrag{Rural}{\scriptsize \hspace{-0.15cm}  Rural}
    \psfrag{Urban}{\scriptsize \hspace{-0.15cm} Urban}
    \psfrag{Stadium}{\hspace{-0.2cm}\scriptsize Crowded}
    \psfrag{CPU}{\hspace{-0.2cm} \scriptsize CPU}
    \psfrag{Fiber}{\hspace{-0.8mm}\scriptsize MCF backhaul}
    \psfrag{mmWave}{\hspace{-0.6mm}\scriptsize RF fronthaul}
    \psfrag{FSO}{\hspace{-0.8mm}\scriptsize FSO fronthaul}
    \psfrag{AP1}{\hspace{-0.15cm}\scriptsize $\text{AP}_1$}
    \psfrag{AP2}{\hspace{-0.05cm}\scriptsize $\text{AP}_2$}
    \psfrag{AP3}{\hspace{-0.05cm}\scriptsize $\text{AP}_3$}
    \psfrag{AP4}{\hspace{0.33cm}\scriptsize $\text{AP}_4$}
    \psfrag{AP5}{\hspace{-0.05cm}\scriptsize $\text{AP}_5$}
    \psfrag{AP6}{\hspace{-0.05cm}\scriptsize $\text{AP}_6$}
    \psfrag{APm}{\hspace{-0.05cm}\scriptsize $\text{AP}_m$}
    \psfrag{APM}{\hspace{-0.05cm}\scriptsize $\text{AP}_M$}
    \psfrag{AN1}{\hspace{0cm}\scriptsize $\text{AN}_1$}
    \psfrag{AN2}{\hspace{-0.05cm}\scriptsize $\text{AN}_2$}
    \psfrag{ANa}{\hspace{-0.33cm} \scriptsize $\text{AN}_a$}
    \psfrag{ANA}{\hspace{-0.1cm}\scriptsize $\text{AN}_A$}
    \psfrag{UE1}{ \hspace{-0.28cm} \scriptsize $\text{UE}_1$}
    \psfrag{UE2}{\hspace{-0.05cm}\scriptsize $\text{UE}_2$}
    \psfrag{UE3}{\hspace{-0.08cm}\scriptsize $\text{UE}_3$}
    \psfrag{UEk}{\scriptsize $\text{UE}_k$}
    \psfrag{UEK}{\hspace{-0.05cm}\scriptsize $\text{UE}_K$}
    \psfrag{Aggregation}{\hspace{0mm}\scriptsize Aggregation}
    \psfrag{node}{\hspace{0.1mm}\scriptsize node}
    \psfrag{Access}{\hspace{0cm}\scriptsize Access}
    \psfrag{point}{\hspace{0.2mm}\scriptsize point}
    \psfrag{User}{\hspace{0cm}\scriptsize User}
    \psfrag{equipment}{\hspace{0.1mm}\scriptsize equipment}
    }}
    \caption{The mMIMO network with RF--FSO fronthauls and MCF backhauls.}
    \label{Fig:Fig.1}
\end{figure}
\begin{figure}[t!]
    \centering
    \pstool[scale=0.47]{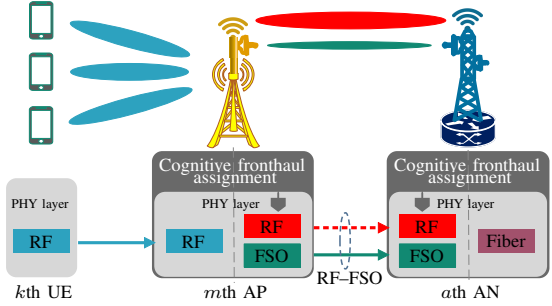}{
    \psfrag{RF}{\scriptsize \hspace{-1.6mm} RF}
    \psfrag{FSO}{\scriptsize \hspace{-1.8mm} FSO}
    \psfrag{RF/FSO}{\scriptsize \hspace{-2.1mm} RF--FSO}
    \psfrag{Access}{\scriptsize \hspace{-1.8mm} Access}
    \psfrag{Fronthaul}{\scriptsize \hspace{-2.5mm} Fronthaul}
    \psfrag{Fiber}{\scriptsize \hspace{-2mm} Fiber}
    \psfrag{UE}{\scriptsize \hspace{-3mm} $k$th UE}
    \psfrag{APm}{\scriptsize \hspace{-3mm} $m$th AP}
    \psfrag{ANa}{\scriptsize \hspace{-3mm} $a$th AN}
    \psfrag{PHYLayer}{\tiny \hspace{-1mm} PHY layer}
    \psfrag{Cognitivefronthaul}{\scriptsize \hspace{-4.8mm} \textcolor{white}{Cognitive fronthaul}}
    \psfrag{assignment}{\scriptsize \hspace{-2.2mm} \textcolor{white}{assignment}}
    }
    \caption{The signal flow over access and cognitively-assigned fronthaul links.}
    \label{Fig:Fig.Ass}
\end{figure}

Before investigating achievable data rates and optimization problems, we study data flow and provide channel and hardware models in the following subsections.

\subsection{Fronthaul Signal Flow}
Considering the \emph{alignment} precision and \emph{weather} conditions as two main factors affecting the FSO performance, each AP decides to forward the received signal over its FSO-only, RF-only, or RF--FSO fronthaul, c.f. Fig.~\ref{Fig:Fig.Ass}, based on a \emph{cognitive fronthaul assignment} algorithm. To this end, we consider four weather conditions--\emph{clear}, \emph{rainy}, \emph{snowy}, and \emph{foggy}. Also, we suggest three FSO alignment conditions--\emph{poor}, \emph{moderate}, and \emph{good} alignments. In the case of RF-only and RF--FSO fronthauling, the available RF bandwidth is shared between the access and fronthaul links due to its high cost, and the frequency-division multiplexing (FDM) method is applied. Let us consider the RF bandwidth as $\text{BW}$, and define $\text{BW}_0$ and $\text{BW}_m$ as the bandwidths of the access layer and the $m$th AP's RF fronthaul, respectively.  So, we have $\text{BW}_{0}\!=\!\underset{a=1,2,...,A}{\text{min}}\{\text{BW}_{0_a}\!\}$, where $\text{BW}_{0_a}\!=\!\text{BW}_{m}\!=\!\text{BW}/(\Delta_a\!+\!1)$, $m\!\in\! \mathcal{M}(a)$\footnote{Since the received signals at an AP are analogy reformed and forwarded, its input and output RF signals' bandwidths are equal. }. Herein, $\Delta_a$ is the number of the RF links connected to the $a$th AN, for $a=1, 2, ..., A$. Thanks to beamforming and interference cancellation techniques, the fronthauls' RF bandwidths are reused per AN. Besides, $\mathcal{M}(a)$ denotes the set of the APs served by the $a$th AN, where $\sum_{a=1}^{A} \mathcal{M}(a) \!=\! M$, and $\mathcal{M}(a)\cap\mathcal{M}(a^\prime)\!=\!\varnothing$ if $a\neq a^\prime$. 

\subsection{Access Channels}
The RF access link between the $k$th UE and $m$th AP is modeled as
\begin{equation}\label{Eq:Eq1}
    \mathbf{g}_{mk} = \beta_{mk}^{1/2} \mathbf{h}_{mk},\text{~for~} m=1, 2, ..., M \text{~and~} k=1, 2, ..., K,
\end{equation}
where $\beta_{mk}$ denotes the large-scale fading, and $\mathbf{h}_{mk}\!\in\!\mathbb{C}^{N\times 1}$ represents the small-scale Rayleigh fading coefficients such that its elements are independent and identically distributed (i.i.d.) $\mathcal{CN}(0,1)$ RVs. 
\subsection{Fronthaul Channels}
Since the fronthaul link could be FSO-only, RF-only, or RF--FSO, in what follows, we discuss channel modelings of RF and FSO links.

\subsubsection{RF links} For the case of RF link between the $m$th AP and $a$th AN, we have
\begin{equation}\label{Eq:Eq2}
    I_{am}^{\text{RF}} = \beta_{am}^{1/2}h_{am},\text{~for~} m\!\in\! \mathcal{M}(a),
\end{equation}
where $\beta_{am}$ and $h_{am}$ respectively present the large-scale and the small-scale fading terms.

\subsubsection{FSO links} For the case of FSO fronthaul, we have 
\begin{equation}\label{Eq:Eq3}
    I_{am}^{\text{FSO}} = I_{l,am} I_{t,am} I_{p,am},\text{~for~} m\!\in\! \mathcal{M}(a),
\end{equation}
where $I_{l,am}$ denotes the path-loss \cite{christopoulou2019outage}
\begin{equation}\label{Eq:EqPL}
     I_{l,am} = \frac{A_r e^{-\gamma d_{am}}}{(\phi_r d_{am})^2} := I_{l,am}^\prime e^{-\gamma d_{am}},
\end{equation}
where $A_r$, $d_{am}$, $\phi_r$, and $\gamma$ denote aperture area, fronthaul length, beam divergence angle, and weather-dependent attenuation coefficient, respectively. Besides, $I_{t,am}$ is the atmospheric turbulence-induced fading, modeled via a log-normal distribution with the following probability distribution function (p.d.f) of \cite{ahmed2018c}
\begin{align}\label{Eq:Eq4} \nonumber
&f_{I_{t,am}}(I_{t,am}) =\\
&~~~\frac{1}{2I_{t,am} \sqrt{2 \pi \delta_{l,am}^2}}\,
\operatorname{exp}\!\left(\!\!-\frac{{\left(\ln(I_{t,am}) + 2\delta_{l,am}^2\right)^2}}{8\delta_{l,am}^2}\!\right)\!\!,
\end{align}
where $\delta_{l,am}^2 \!=\! 0.307 C_n^2 k_{am}^{7/6} d_{am}^{11/6}$ is the log-amplitude variance for plane waves, and $C_n^2$ denotes the index of refraction structure parameter. Herein, $k_{am}\!=\!{2 \pi}/\lambda_{am}$ in which $\lambda_{am}$ is the wavelength of the FSO link. Furthermore, $I_{p,am}$ represents the boresight pointing error with the following pdf \cite{farid2007outage}
\begin{equation}\label{Eq:EqPE}
    f_{I_{p,am}}(I_{p,am}) = \frac{\xi^2}{I_0^{\xi^2}} I_{p,am}^{\xi^2-1}, ~~ 0\leq I_{p,am}\leq I_0,
\end{equation}
where $I_0 \!=\! \left[\operatorname{erf}(v)\right]^2$, and $\xi \!=\! \frac{1}{2} w_{z_{eq}}\sigma_s^{-1}$ implies the ratio of the equivalent beam radius to the pointing error displacement standard deviation at the receiver. Moreover, $w_{z_{eq}}^2 \!=\! w_z^2 \sqrt{0.25 \pi} \operatorname{erf}(v) v^{-1} \operatorname{exp}(v^2)$ and $v \!=\! \sqrt{0.5 \pi} w_z^{-1} r_{s} $, where $w_z$ is the beam waist at distance $z$, and $r_s$ depicts the alignment-based radial distance at the detector.
Finally, the pdf of the $I_{am}^\text{FSO}$ is obtained as \cite{farid2007outage}
\begin{equation}\label{Eq:EqTotalF}
    f_{I_{am}^\text{FSO}}(I_{am}^\text{FSO}) = \frac{\xi^2 (I_{am}^\text{FSO})^{\xi^2-1}}{2 (I_0I_{l,am})^{\xi^2}} \operatorname{erfc}\! \left(\! \frac{\ln\!\left(\!{\dfrac{I_{am}^\text{FSO}}{I_0I_{l,am}}}\!\right) \!+\! \varphi_{am}}{\sqrt{8 \delta_{l,am}^2}}\!\right) \!\varphi_{am}^\prime,
\end{equation}
where $\varphi_{am} \!=\! 2 \delta_{l,am}^2 \!\left(1+2\xi^2\right)$ and $\varphi_{am}^\prime \!=\! 2 \delta_{l,am}^2 \xi^2 \!\left(1+\xi^2\right)$.

\subsection{Hardware Models}
The received signals at the APs and ANs are analogy reformed and forwarded through the fronthaul and backhaul links, respectively. Two hardware models are employed; clipping and HI models. In general, the noisy distorted signal is modeled by the Bussgang theorem, as follows \cite{demir2020bussgang}
\begin{equation}\label{Eq:Eq5}
    x_{out} = \mu x_{in} + n_d,
\end{equation}
where $\mu\!\in\!\{\mu_m,\mu_a\}$ denotes the distortion gain. Furthermore, $x_{in}\!\sim\! \mathcal{CN}(0,\delta^2)$ is input signal, and $n_d \!\sim\! \mathcal{CN}(0,\delta_d^2)$ is the distortion noise.

\subsubsection{Clipping Model}
To convert the RF signals to the FSO form at an AP, we apply a clipping model in which the received signal is firstly clipped, and then a DC bias is added to make the optical signal non-negative. Finally, the amplified signal is transmitted with a laser diode. This technique is called RF-over-FSO (RoFSO), c.f. Fig.~\ref{Fig:Fig.3}\,(a). Likewise, the same method is used for reforming the FSO and/or RF signals to match the fiber optics at an AN, sequentially named RF-over-fiber (RoF) and FSO-over-fiber (FSoF), c.f. Fig.~\ref{Fig:Fig.3}\,(b). For the clipping model, the distortion gain becomes \cite{ahmed2018c}
\begin{equation}\label{Eq:Eq6}
    \mu = \mu_0\,{\operatorname{erf}}\!\left(\!\frac{B_c}{\sqrt{2 \mu_0^2 \delta^2}}\!\right)\!,
\end{equation}
where $\mu_0$ denotes the laser's gain, $B_c$ represents the clipping level, and the variance of the distortion noise is given by \cite{ahmed2018c}
\begin{align}\label{Eq:Eq7} \nonumber
    \delta_d^2 &= B_c^2 \Big(2-\dfrac{\mu}{\mu_0}\Big) + 
     \mu_0^2 \delta^2 \Big(1 - \dfrac{\mu}{\mu_0}\Big)  \dfrac{\mu}{\mu_0}\\
     &~~~-
    \sqrt{\frac{2 B_c^2 \mu_0^2 \delta^2}{\pi}}\,
    {\operatorname{exp}}\!\left(\!-{\frac{B_c^2}{2 \mu_0^2\delta^2}}\!\right)\!\!.
\end{align}

\begin{figure}[t!]
    \centering
    \subfloat[]{
    \pstool[scale=0.35]{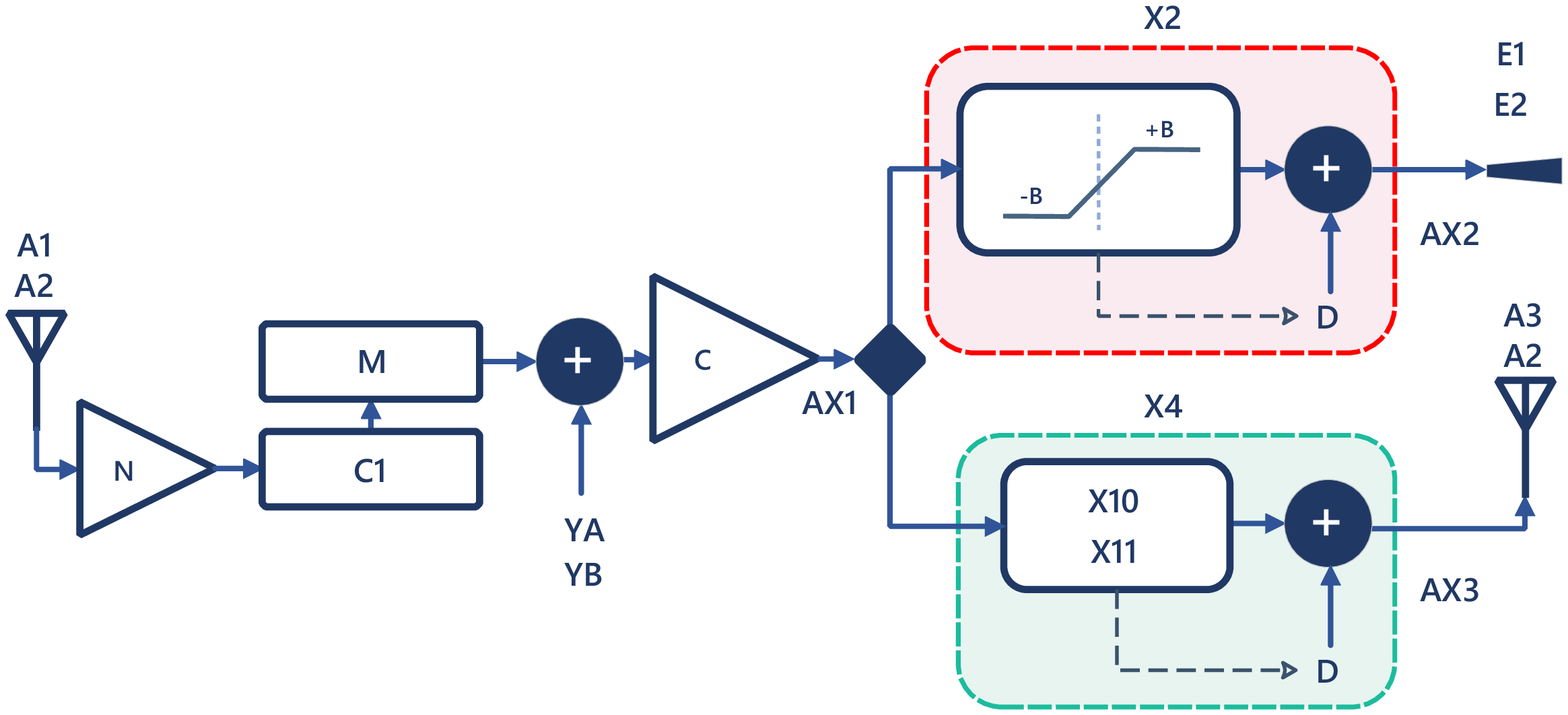}{
    \psfrag{A1}{\scriptsize \hspace{-0.15cm} RF}
    \psfrag{A2}{\scriptsize \hspace{-0.35cm} antenna}
    \psfrag{A3}{\scriptsize \hspace{-0.28cm} RF}
    \psfrag{N}{\scriptsize \hspace{-0.25cm} LNA}
    \psfrag{M}{\scriptsize \hspace{-0.48cm} IF to BB}
    \psfrag{YA}{\scriptsize \hspace{-0.42cm} Thermal}
    \psfrag{YB}{\scriptsize \hspace{-0.28cm} noise}
    \psfrag{C}{\scriptsize \hspace{-0.28cm} Driver}
    \psfrag{D}{\small \hspace{-0.15cm} $n_c$}
    \psfrag{X1}{\scriptsize \hspace{-0.17cm} $G_m$}
    \psfrag{X2}{\scriptsize \hspace{-1.27cm} Clipping model (RoFSO)}
    \psfrag{X4}{\scriptsize \hspace{-0.77cm} HI model (RoR)}
    \psfrag{C1}{\scriptsize \hspace{-0.45cm} RF to IF}
    \psfrag{AX1}{\small \hspace{-0.2cm} $x_{in}$}
    \psfrag{AX2}{\small \hspace{-0.18cm} $x_{out}^\text{FSO}$}
    \psfrag{AX3}{\small \hspace{-0.18cm} $x_{out}^\text{RF}$}
    \psfrag{X10}{\scriptsize \hspace{-0.3cm} Quality}
    \psfrag{X11}{\scriptsize \hspace{-0.24cm} factor}
    \psfrag{B}{\scriptsize \hspace{-0.2cm} $\text{B}_c$}
    \psfrag{+B}{\scriptsize \hspace{-0.2cm} +$\text{B}_c$}
    \psfrag{-B}{\scriptsize \hspace{-0.25cm} -$\text{B}_c$}
    \psfrag{E1}{\scriptsize \hspace{-0.3cm} FSO}
    \psfrag{E2}{\scriptsize \hspace{-0.55cm} laser diode}
    \psfrag{IF=}{\scriptsize IF = Intermediate frequency}
    \psfrag{BB=}{\scriptsize BB = Baseband}}
    }
    \hfill
    \subfloat[]{
    \pstool[scale=0.35]{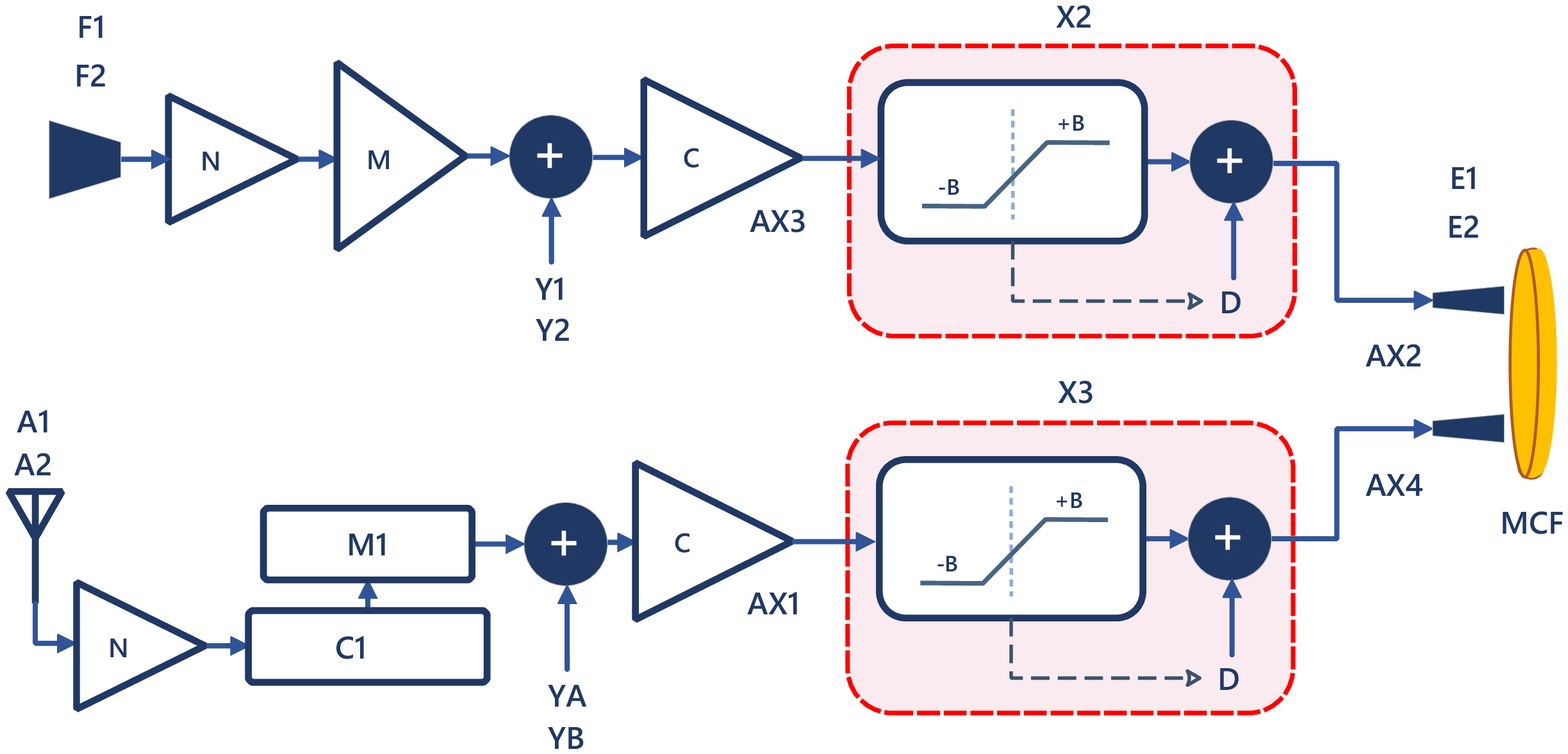}{
    \psfrag{A1}{\scriptsize \hspace{-0.28cm} RF}
    \psfrag{A2}{\scriptsize \hspace{-0.35cm} antenna}
    \psfrag{N}{\scriptsize \hspace{-0.25cm} LNA}
    \psfrag{M}{\scriptsize \hspace{-0.23cm} Amp}
    \psfrag{Y1}{\scriptsize \hspace{-0.67cm} Thermal and}
    \psfrag{Y2}{\scriptsize \hspace{-0.55cm} shot noise}
    \psfrag{YA}{\scriptsize \hspace{-0.38cm} Thermal}
    \psfrag{YB}{\scriptsize \hspace{-0.25cm} noise}
    \psfrag{C}{\scriptsize \hspace{-0.28cm} Driver}
    \psfrag{D}{\small \hspace{-0.15cm} $n_c$}
    \psfrag{X1}{\scriptsize \hspace{-0.17cm} $G_m$}
    \psfrag{X2}{\scriptsize \hspace{-1.2cm} Clipping model (FSoF)}
    \psfrag{X3}{\scriptsize \hspace{-1.15cm} Clipping model (RoF)}
    \psfrag{X4}{\scriptsize \hspace{-0.45cm} HI model}
    \psfrag{AX1}{\small \hspace{-0.23cm} $x_{in}^\text{RF}$}
    \psfrag{AX2}{\small \hspace{-0.18cm} $x_{out}^\text{FSO}$}
    \psfrag{AX4}{\small \hspace{-0.18cm} $x_{out}^\text{RF}$}
    \psfrag{MCF}{\scriptsize \hspace{-0.21cm} MCF}
    \psfrag{AX3}{\small \hspace{-0.18cm} $x_{in}^\text{FSO}$}
    \psfrag{X10}{\scriptsize \hspace{-0.3cm} Quality}
    \psfrag{X11}{\scriptsize \hspace{-0.24cm} factor}
    \psfrag{B}{\scriptsize \hspace{-0.2cm} $\text{B}_c$}
    \psfrag{+B}{\scriptsize \hspace{-0.2cm} +$\text{B}_c$}
    \psfrag{-B}{\scriptsize \hspace{-0.25cm} -$\text{B}_c$}
    \psfrag{E1}{\scriptsize \hspace{-0.4cm} Optical}
    \psfrag{E2}{\scriptsize \hspace{-0.6cm} laser diodes}
    \psfrag{F1}{\scriptsize \hspace{-0.3cm} FSO}
    \psfrag{F2}{\scriptsize \hspace{-0.75cm} photodetector}
    \psfrag{C1}{\scriptsize \hspace{-0.35cm} RF to IF}
    \psfrag{M1}{\scriptsize \hspace{-0.45cm} IF to BB}
    \psfrag{Y}{\scriptsize \hspace{-0.75cm} Thermal noise}}
    }
    \caption{The simplified hardware models at the (a) APs and (b) ANs. Here, IF stands for intermediate frequency and BB indicates baseband.}
    \label{Fig:Fig.3}
\end{figure}

\subsubsection{Hardware Impairment Model} 
Since the RF devices and signal processing are not ideal, the up-conversion of the received RF signals at an AP and forwarding them over the RF fronthaul links are subject to the HI, which is named RF-over-RF (RoR), as shown in Fig.~\ref{Fig:Fig.3}\,(a). In (\ref{Eq:Eq5}), we set $\mu\!=\!\xi^{1/2}$, where $\xi \! \in \! [0,1]$ indicates the hardware quality factor, $n_d \!\sim\! \mathcal{CN}(0,(1-\xi)\delta^2)$ is uncorrelated with the input signal, and we have $\mathbb{E}\left\{|x_{out}|^2\right\} \!=\! \mathbb{E}\left\{|x_{in}|^2\right\} \!=\! \delta^2$ 
\cite{masoumi2019performance}.

\section{Channel Training and Data Transmission} \label{Sec:Sec3}
In this section, we firstly present the uplink channel training phase to estimate the RF and FSO channel coefficients. Then, the uplink data transmissions are investigated.

\subsection{Uplink Channel Training}
In the channel training phase, the UEs and APs independently transmit channel training sequences through their access and fronthaul links, respectively. The channel coefficients of the RF access links are estimated at each AP after some mathematical computations. Likewise, the channel gains of the RF and FSO fronthaul links are acquired at the corresponding ANs. Without loss of generality, it is assumed that the CPU knows the estimated channels derived at the APs and ANs. Likewise, the estimated channels at each AN are shared with the served APs, which is required for the fronthaul assignment algorithm. 

\subsubsection{RF channel estimation at APs}
All UEs transmit their $\tau_{p}$-length mutually orthogonal pilot sequences $\sqrt{\tau_{p} \rho_p} \bm{\varphi}_k \! \in \! \mathbb{C}^{\tau_{p} \times 1}$ for $k=1,2,...,K$, where $||\bm{\varphi}_k||^2\!=\!1$ and $\tau_{p}\!\geq\!K$. Therefore, the  $N \!\times\! \tau_{p}$ received superimposed pilot matrix at the $m$th AP is given by
\begin{equation}\label{Eq:Eq8}
    \bm{y}_{p,m} = \sqrt{\tau_{p} \rho_p} \sum\limits_{k=1}^{K} \mathbf{g}_{mk} \bm{\varphi}_k^H + \bm{\omega}_{p,m},
\end{equation}
where $\rho_p$ represents pilot transmission power, and $\bm{\omega}_{p,m} \!\sim\! \mathcal{CN}(0,\sigma_{p,m}^2\mathbf{\textit{I}}_{N})$ denotes the additive noise matrix with i.i.d. elements. Afterward, the received signal is multiplied with the conjugate transpose of the $k$th UE's pilot sequence as
\begin{equation}\label{Eq:Eq9}
    \tilde{\bm{y}}_{p,mk} 
    = \bm{y}_{p,m} \bm{\varphi}_k 
    = \sqrt{\tau_{p} \rho_p} \sum\limits_{k^\prime=1}^{K} \mathbf{g}_{mk^\prime} \bm{\varphi}_{k^\prime}^H \bm{\varphi}_k
    + \bm{\omega}_{p,m} \bm{\varphi}_k.
\end{equation}
To obtain the wireless channel coefficients at the $m$th AP, a linear minimum mean-square error (LMMSE) estimation is performed as follows 
\begin{align}\label{Eq:Eq10} \nonumber
    \hat{\mathbf{g}}_{mk} &= \Big({\mathbb{E}\big\{ \mathbf{g}_{mk} \tilde{\bm{y}}_{p,mk}^H\!\big\}}\!\Big)\!\Big({\mathbb{E}\big\{\tilde{\bm{y}}_{p,mk} \tilde{\bm{y}}_{p,mk}^H\!\big\}}\!\Big)^{\!-1}\tilde{\bm{y}}_{p,mk} \\
    &= \frac{\sqrt{\tau_{p} \rho_p} \beta_{mk}}{{\tau_{p} \rho_p} \sum\limits_{k^\prime=1}^{K} \beta_{mk^\prime} |\bm{\varphi}_{k^\prime}^H \bm{\varphi}_k|^2
    + \sigma_{p,m}^2} \tilde{\bm{y}}_{p,mk} := \zeta_{mk} \tilde{\bm{y}}_{p,mk},
\end{align}
and the mean-square of the estimated channel becomes
\begin{align}\label{Eq:Eq12}
    \gamma_{mk} &=: \mathbb{E}\{\hat{\mathbf{g}}_{mk}^H \hat{\mathbf{g}}_{mk} \} = \sqrt{\tau_{p} \rho_p} \beta_{mk} \zeta_{mk}.
\end{align}

\subsubsection{FSO channel estimation at ANs} To measure the FSO channel gain between the $a$th AN and the served $m$th AP, the AP sends its $\tau_{p}$-length mutually orthogonal pilot signal $\bm{\varphi}_m$ through its FSO link, where $||\bm{\varphi}_m||^2\!=\!1$. Thus, the received signal at the $a$th AN is modeled as\footnote{We assume that the FSO detectors are sufficiently spaced to avoid optical interference and cross-talk \cite{najafi2017c}.}
\begin{equation}\label{Eq:Eq13}
    \bm{y}_{p,am} = \sqrt{\tau_{p} P_{\normalfont\text{max}}^{\normalfont\text{FSO}}} I_{am}^{\text{FSO}} \bm{\varphi}_m + \bm{\upsilon}_{p,am},
\end{equation}
where $P_{\normalfont\text{max}}^{\normalfont\text{FSO}}$ is the maximum FSO transmission power at each AP satisfying the eye-safety regulations, and $\bm{\upsilon}_{p,am} \!\sim\! \mathcal{CN}(0,(\phi_{p,am}^\text{FSO})^2\mathbf{\textit{I}}_{\tau_{p}})$ is the additive noise vector with i.i.d. elements. Similarly, by multiplying $\bm{y}_{p,am}$ with $\bm{\varphi}_m^{H}$, and applying the LMMSE\footnote{The LMMSE is a sub-optimal estimation technique for the FSO channels due to their non-Gaussian statistics.}, we have
\begin{equation}\label{Eq:Eq15}
    \hat{I}_{am}^{\text{FSO}} = 
    \frac{\sqrt{\tau_{p} P_{\normalfont\text{max}}^{\normalfont\text{FSO}}}\, \Gamma_{am}^2}{{\tau_{p} P_{\normalfont\text{max}}^{\normalfont\text{FSO}}} \Gamma_{am}^2 + (\phi_{p,am}^\text{FSO})^2} \tilde{{y}}_{p,am} := \zeta_{am}^\text{FSO} \tilde{{y}}_{p,am},
\end{equation}
where, by use of (\ref{Eq:EqTotalF}), $\Gamma_{am}^n$ is
\begin{align}\label{Eq:Eq17} \nonumber
     \Gamma_{am}^n &=: \mathbb{E}\{|I_{am}^\text{FSO}|^n\} =
    \frac{\xi^2 (I_0I_{l,am})^{n}}{\xi^2+n} \\
    &~~~\times\! \operatorname{exp}\!\bigg[2\Big(\xi^2\!+\!n\Big)^{\!2}\delta_{l,am}^2 \!-\!\Big(\xi^2\!+\!n\Big)\varphi_{am} \bigg]\varphi_{am}^\prime,
\end{align}
and the mean-square of the estimated channel becomes
\begin{align}\label{Eq:Eq18}
    \gamma_{am}^\text{FSO} &=: \mathbb{E}\{|\hat{I}_{am}^{\text{FSO}}|^2\} = \sqrt{\tau_{p} P_{\normalfont\text{max}}^{\normalfont\text{FSO}}}\, \Gamma_{am}^2 \zeta_{am}^\text{FSO}.
\end{align}

\subsubsection{RF channel estimation at ANs} 
The $m$th AP transmits $\tau_{p}$-length mutually orthogonal pilot sequence $\bm{\varphi}_m^\prime$ to the serving AN via its RF link. Hence, the  $\tau_{p} \!\times\! 1$ received pilot vector at the $a$th AN is given by
\begin{equation}\label{Eq:Eq19}
    \bm{y}_{p,am}^\prime = \sqrt{\tau_{p} P_{\normalfont\text{max}}^{\normalfont\text{RF}}} I_{am}^{\text{RF}} \bm{\varphi}_m^\prime + \bm{\upsilon}_{p,am}^\prime,
\end{equation}
where $P_{\normalfont\text{max}}^{\normalfont\text{RF}}$ denotes the maximum wireless transmission power at each AP, and $\bm{\upsilon}_{p,am}^\prime \!\sim\! \mathcal{CN}(0,(\phi_{p,am}^\text{RF})^2\mathbf{\textit{I}}_{\tau_{p}})$ represents the additive noise vector with i.i.d. components. Since $||\bm{\varphi}_m^\prime||^2\!=\!1$, by multiplying $\bm{y}_{p,am}^\prime$ with $\bm{\varphi}_m^{\prime H}$, and applying the LMMSE, we have
\begin{equation}\label{Eq:Eq21}
    \hat{I}_{am}^{\text{RF}} = 
    \frac{\sqrt{\tau_{p} P_{\normalfont\text{max}}^{\normalfont\text{RF}}}\, \beta_{am}}{{\tau_{p} P_{\normalfont\text{max}}^{\normalfont\text{RF}}} \beta_{am}
    + (\phi_{p,am}^\text{RF})^2}\tilde{{y}}_{p,am}^\prime 
    := \zeta_{am}^\text{RF} \tilde{{y}}_{p,am}^\prime,
\end{equation}
and the mean-square of the estimated channel is derived as
\begin{align}\label{Eq:Eq23}
    \gamma_{am}^\text{RF} &=: \mathbb{E}\{|\hat{I}_{am}^{\text{RF}}|^2\} = \sqrt{\tau_{p} P_{\normalfont\text{max}}^{\normalfont\text{RF}}}\, \beta_{am} \zeta_{am}^\text{RF}.
\end{align}

\subsection{Uplink Data Transmission}
After receiving the UEs' data signal in the uplink by each AP, it reforms and forwards the processed signal over its fronthaul to the dedicated AN. Then, the corresponding AN reshapes the received FSO and/or RF signals to the optical ones and delivers them to the CPU via its MCF backhaul link. Finally, users' data recovery is performed at the CPU. In what follows, the details of data recovery are presented.

\subsubsection{User-centric clustering}
In the UC-mMIMO network \cite{buzzi2017user}, the $m$th AP only supports a subset of all UEs, i.e., $\mathcal{K}(m)\!\subseteq\!\{1, 2, ..., K\}$. Consequently, $\mathcal{M}(k)$ denotes the set of APs serving the $k$th UE, which is defined as
$
\mathcal{M}(k) = \left\{m : k\in \mathcal{K}(m) \right\}\!.
$
By applying a neighborhood clustering mechanism \cite{buzzi2017cell}, the $m$th AP classifies the estimated channel coefficients in a descending order and only serves a limited number of UEs, i.e., $|\mathcal{K}(m)|\!\leq\!K$. 

\subsubsection{Data processing at APs} 
First, the received signal at the $m$th AP is given by
\begin{equation}\label{Eq:Eq25}
    \bm{y}_{u,m} = \sqrt{\rho_u} \sum\limits_{k=1}^{K} \sqrt{\eta_k} \mathbf{g}_{mk} s_k + \bm{\omega}_{u,m},
\end{equation}
where $s_k \!\sim\! \mathcal{CN}(0,1)$ denotes the information symbol of the $k$th UE weighted by power control coefficient $\eta_k \!\in\! [0,1]$. Also, $\rho_u$ represents the maximum signal transmission power, and $\bm{\omega}_{u,m} \!\sim\! \mathcal{CN}(0,\sigma_{u,m}^2\mathbf{\textit{I}}_{N})$ indicates the additive noise vector with i.i.d. elements. Then, the received signal is reformed using the RoFSO and/or RoR techniques and forwarded through FSO and/or RF fronthaul links, respectively, as follows
\begin{equation}\label{Eq:EqFSO1}
    \tilde{\bm{x}}_{u,am} = \epsilon_{am} \Big( \mu_m\bm{y}_{u,m} + \bm{n}_{d,m} \!\Big) := \epsilon_{am} {\bm{x}}_{u,am},
\end{equation}
\begin{equation}\label{Eq:EqRF1}
   \tilde{\bm{x}}_{u,am}^\prime =  \epsilon_{am}^\prime \Big( \mu_m^\prime \bm{y}_{u,m} + \bm{n}_{d,m}^\prime \!\Big) := \epsilon_{am}^\prime {\bm{x}}_{u,am}^\prime,
\end{equation}
where $\mu_m$ and $\mu_m^\prime$ denote the FSO and RF distortion gains, respectively. Herein, $\bm{n}_{d,m} \!\sim\! \mathcal{CN}(0,\delta_{d,m}^2 \mathbf{\textit{I}}_N)$ and $\bm{n}_{d,m}^\prime \!\sim\! \mathcal{CN}(0,\delta_{d,m}^{\prime 2} \mathbf{\textit{I}}_N)$ represent the distortion noises. In addition, $\epsilon_{am}\!\in\!\{0,1\}$ and $\epsilon_{am}^\prime\!\in\!\{0,1\}$ present the \emph{fronthaul assignment indices} of the FSO and RF links between the $a$th AN and the $m$th AP, $m\!\in\!\mathcal{M}(a)$, respectively. With $\epsilon_{am}\!=\!1$ and $\epsilon_{am}^\prime\!=\!0$, the signal is sent over the FSO-only fronthaul, while with $\epsilon_{am}\!=\!0$ and $\epsilon_{am}^\prime\!=\!1$, the signal is sent through the RF-only fronthaul. However, the signal is transmitted over the RF--FSO fronthaul if $\epsilon_{am}\!=\!\epsilon_{am}^\prime\!=\!1$.

\subsubsection{Data processing at ANs}
The received FSO and/or RF signals at the $a$th AN are sequentially presented as below
\begin{equation}\label{Eq:Eq26}
    \bm{y}_{u,am} = \epsilon_{am} \Big( {\bm{x}}_{u,am} I_{am}^\text{FSO} + \bm{\upsilon}_{u,am} \!\Big),
\end{equation}
\begin{equation}\label{Eq:Eq27}
    \bm{y}_{u,am}^\prime = \epsilon_{am}^\prime \Big( {\bm{x}}_{u,am}^\prime I_{am}^\text{RF} + \bm{\upsilon}_{u,am}^\prime \!\Big),
\end{equation}
Also, $\bm{\upsilon}_{u,am} \!\sim\! \mathcal{CN}(0,(\phi_{u,am}^\text{FSO})^2\mathbf{\textit{I}}_N)$ and $\bm{\upsilon}_{u,am}^\prime \!\sim\! \mathcal{CN}(0,(\phi_{u,am}^\text{RF})^2\mathbf{\textit{I}}_N)$ are the additive noise vectors at the AN with i.i.d. elements. Then, the AN applies the FSoF and RoF techniques to match the FSO and/or RF signals with the optical forms, respectively, and sends them over the connected MCF backhaul link. 

\subsubsection{Data processing at the CPU}
The CPU acquires the reshaped and noisy versions of the FSO and/or RF signal vectors sequentially as below
\begin{equation}\label{Eq:Eq28}
    \bm{r}_{u,am} = \mu_{am}\bm{y}_{u,am} + \epsilon_{am} \Big( \bm{n}_{d,am} + \bm{\nu}_\text{CPU} \!\Big),
\end{equation}
\begin{equation}\label{Eq:Eq29}
    \bm{r}_{u,am}^\prime = \mu_{am}^\prime \bm{y}_{u,am}^\prime + \epsilon_{am}^\prime \Big( \bm{n}_{d,am}^\prime + \bm{\nu}_\text{CPU} \!\Big),
\end{equation}
where $\mu_{am}$ and $\mu_{am}^\prime$ represent the FSO and RF distortion gains, respectively. Also, $\bm{n}_{d,am} \!\sim\! \mathcal{CN}(0,\delta_{d,am}^2 \mathbf{\textit{I}}_N)$ and $\bm{n}_{d,am}^\prime \!\sim\! \mathcal{CN}(0,\delta_{d,am}^{\prime 2} \mathbf{\textit{I}}_N)$ illustrate the distortion noises, and $\bm{\nu}_\text{CPU} \!\sim\! \mathcal{CN}(0,\phi_\text{CPU}^2\mathbf{\textit{I}}_N)$ denotes the additive noise at the CPU. By using (\ref{Eq:Eq25})--(\ref{Eq:Eq29}), the received signals at the CPU are rewritten as
\begin{equation}\label{Eq:Eq30}
    \bm{r}_{u,am} = \epsilon_{am} \bigg(\! J_{am} I_{am}^\text{FSO} \sum\limits_{k=1}^{K} \sqrt{\eta_k} \mathbf{g}_{mk} s_k + \bm{\Theta}_{u,am} \!\bigg),
\end{equation}
\begin{equation}\label{Eq:Eq31}
    \bm{r}_{u,am}^\prime = \epsilon_{am}^\prime \bigg(\! J_{am}^\prime I_{am}^\text{RF} \sum\limits_{k=1}^{K} \sqrt{\eta_k} \mathbf{g}_{mk} s_k + \bm{\Theta}_{u,am}^\prime \!\bigg),
\end{equation}
where $J_{am} \!=\! \sqrt{\rho_u} \mu_m \mu_{am}$, $J_{am}^\prime \!=\! \sqrt{\rho_u} \mu_m^\prime \mu_{am}^\prime$, and we have
\begin{align*}
     \bm{\Theta}_{u,am} &= \mu_m \mu_{am} I_{am}^\text{FSO} \bm{\omega}_{u,m} + \mu_{am} I_{am}^\text{FSO}\bm{n}_{d,m}\\
    &~~~ + \mu_{am} \bm{\upsilon}_{u,am} + \bm{n}_{d,am} + \bm{\nu}_\text{CPU},\\
    \bm{\Theta}_{u,am}^\prime &= \mu_m^\prime \mu_{am}^\prime I_{am}^\text{RF}  \bm{\omega}_{u,m} + \mu_{am}^\prime I_{am}^\text{RF} \bm{n}_{d,m}^\prime \\
    &~~~+ \mu_{am}^\prime \bm{\upsilon}_{u,am}^\prime + \bm{n}_{d,am}^\prime + \bm{\nu}_\text{CPU},
\end{align*}
where $\mathbb{E}\{\bm{\Theta}_{u,am}\}\!=\!\mathbb{E}\{\bm{\Theta}_{u,am}^\prime\}\!=\!0$. Besides, the covariance matrices of $\bm{\Theta}_{u,am}$ and $\bm{\Theta}_{u,am}^\prime$ are respectively denoted by  $\bm{\Omega}_{u,am}^2\!=\!{\Omega}_{u,am}^2\mathbf{\textit{I}}_{N\!\times\!N}$ and $\bm{\Omega}_{u,am}^{\prime 2}\!=\!{\Omega}_{u,am}^{\prime 2}\mathbf{\textit{I}}_{N\!\times\!N}$, where
\begin{align}\label{Eq:Eq32} \nonumber
    {\Omega}_{u,am}^2 &= \mu_m^2 \mu_{am}^2 \Gamma_{am}^2 \sigma_{u,m}^2 + \mu_{am}^2 \Gamma_{am}^2\delta_{d,m}^2\\
    &~~~ + \mu_{am}^2 (\phi_{u,am}^\text{FSO})^2 + \delta_{d,am}^2 + \phi_\text{CPU}^2,
\end{align}
\begin{align}\label{Eq:Eq33} \nonumber
    {\Omega}_{u,am}^{\prime2} &= \mu_m^{\prime 2} \mu_{am}^{\prime 2} \beta_{am}  \sigma_{u,m}^2 + \mu_{am}^{\prime 2} \beta_{am} \delta_{d,m}^{\prime 2}\\
    &~~~ + \mu_{am}^{\prime 2} (\phi_{u,am}^\text{RF})^2 + \delta_{d,am}^{\prime 2} + \phi_\text{CPU}^2.
\end{align}
One can show that all terms of the overall additive noises are mutually uncorrelated. Thus, the variances of the noises, given in (\ref{Eq:Eq32}) and (\ref{Eq:Eq33}), become equal to the sum of the variances of superimposed noises. 
\section{Achievable Rates Analysis} \label{Sec:Sec4}
The data recovery and uplink achievable data rates, for the CF- and UC-mMIMO networks, are discussed in the following subsections.

\subsection{Uplink Data Recovery}
By applying the MRC detection at the CPU, data recovery is performed for the CF- and UC-mMIMO networks.

\subsubsection{UC-mMIMO network} The CPU gathers the uplink signals from $A$ ANs, i.e., (\ref{Eq:Eq30}) and (\ref{Eq:Eq31}), and recovers the data symbol of the $k$th UE, i.e. $s_k$, by applying the following MRC technique
\begin{align}\label{Eq:Eq34}
    r_{u,k} = \!\sum\limits_{a=1}^{A} \sum\limits_{m\in \mathcal{M}_k(a)}\!\! \!\!\hat{\mathbf{g}}_{mk}^H \Big(\!(\hat{I}_{am}^{\text{FSO}})^*\bm{r}_{u,am} \!+\! (\hat{I}_{am}^{\text{RF}})^*\bm{r}_{u,am}^\prime \!\Big)\!,
\end{align}
where $\mathcal{M}_k(a) \!=\! \mathcal{M}(a) \cap \mathcal{M}(k)$.

\subsubsection{CF-mMIMO network}
For the CF-mMIMO network, the data recovery is employed as in (\ref{Eq:Eq34}), only by use of $\mathcal{M}_k(a) \!=\! \mathcal{M}(a)$. 

\subsection{Uplink Achievable Data Rates}
Through this subsection, we derive the uplink achievable data rates for the CF- and UC-mMIMO networks by employing the well-known use-and-then-forget (UatF) bounding technique \cite{marzetta2016fundamentals}.

\subsubsection{UC-mMIMO network} In the UC-mMIMO network, (\ref{Eq:Eq34}) is rewritten as
\begin{equation}\label{Eq:Eq35}
    r_{u,k} = \text{DS}_k s_k+ \underbrace{\text{BU}_k s_k
    + \sum\limits_{\substack{k^\prime=1 \\ k^\prime\neq k}}^{K} \text{IUI}_{kk^\prime} s_{k^\prime} + \text{N}_{k}}_{\text{Effective noise}},
\end{equation}
where 
\begin{itemize}
    \item $\text{DS}_k$ denotes the desired coefficient of the $k$th UE
    \begin{align}\label{Eq:Eq36} \nonumber
        \!\!\!\text{DS}_k &= \sqrt{\eta_k}\, \mathbb{E}\Bigg\{ \sum\limits_{a=1}^{A} \sum\limits_{m\in \mathcal{M}_k(a)}\!\! \epsilon_{am} J_{am} (\hat{I}_{am}^{\text{FSO}})^* I_{am}^{\text{FSO}}   \hat{\mathbf{g}}_{mk}^H \mathbf{g}_{mk}\\
        &~~~+ \sum\limits_{a=1}^{A} \sum\limits_{m\in \mathcal{M}_k(a)}\!\! \epsilon_{am}^\prime J_{am}^\prime (\hat{I}_{am}^{\text{RF}})^* I_{am}^{\text{RF}}  \hat{\mathbf{g}}_{mk}^H \mathbf{g}_{mk}\Bigg\},
        \end{align}
    \item $\text{BU}_k$ is the beamforming uncertainty coefficient of the $k$th UE due to the statistical knowledge of the CSI, i.e., UatF bounding,
    \begin{align}\label{Eq:Eq37} \nonumber
       \!\!\!\text{BU}_k &= \sqrt{\eta_k} \Bigg(\!  \sum\limits_{a=1}^{A} \sum\limits_{m\in \mathcal{M}_k(a)}\!\! \epsilon_{am}   J_{am} (\hat{I}_{am}^{\text{FSO}})^* I_{am}^{\text{FSO}}   \hat{\mathbf{g}}_{mk}^H \mathbf{g}_{mk}\\ \nonumber
        &- \mathbb{E}\Bigg\{ \sum\limits_{a=1}^{A} \sum\limits_{m\in \mathcal{M}_k(a)}\!\! \epsilon_{am} J_{am} (\hat{I}_{am}^{\text{FSO}})^* I_{am}^{\text{FSO}}   \hat{\mathbf{g}}_{mk}^H \mathbf{g}_{mk}\Bigg\}\\ \nonumber
        &+ \sum\limits_{a=1}^{A} \sum\limits_{m\in \mathcal{M}_k(a)}\!\! \epsilon_{am}^\prime J_{am}^\prime (\hat{I}_{am}^{\text{RF}})^* I_{am}^{\text{RF}}   \hat{\mathbf{g}}_{mk}^H \mathbf{g}_{mk} \\
        &- \mathbb{E}\Bigg\{\sum\limits_{a=1}^{A} \sum\limits_{m\in \mathcal{M}_k(a)}\!\! \epsilon_{am}^\prime J_{am}^\prime  (\hat{I}_{am}^{\text{RF}})^* I_{am}^{\text{RF}}   \hat{\mathbf{g}}_{mk}^H \mathbf{g}_{mk}\Bigg\}
        \!\Bigg)\!,
         \end{align}
    \item $\text{IUI}_{kk^\prime}$ represents the inter-user interference coefficient from the $k^\prime$th UE
    \begin{align}\label{Eq:Eq38} \nonumber
        \!\!\!\text{IUI}_{kk^\prime} &=  \sqrt{\eta_{k^\prime}} \Bigg(\!\sum\limits_{a=1}^{A} \sum\limits_{m\in \mathcal{M}_k(a)}\!\! \epsilon_{am}  J_{am} (\hat{I}_{am}^{\text{FSO}})^* I_{am}^{\text{FSO}}   \hat{\mathbf{g}}_{mk}^H \mathbf{g}_{mk^\prime}\\
        &~~~+ \sum\limits_{a=1}^{A} \sum\limits_{m\in \mathcal{M}_k(a)}\!\! \epsilon_{am}^\prime J_{am}^\prime (\hat{I}_{am}^{\text{RF}})^* I_{am}^{\text{RF}}   \hat{\mathbf{g}}_{mk}^H \mathbf{g}_{mk^\prime}\!\Bigg)\!,
        \end{align}
    \item $\text{N}_{k}$ indicates the overall additive effective noises
    \begin{align}\label{Eq:Eq39} \nonumber
        \!\!\!\text{N}_{k} &= \sum\limits_{a=1}^{A} \sum\limits_{m\in \mathcal{M}_k(a)}\!\! \epsilon_{am}  (\hat{I}_{am}^{\text{FSO}})^* \hat{\mathbf{g}}_{mk}^H \bm{\Theta}_{u,am}\\
         &~~~+ \sum\limits_{a=1}^{A} \sum\limits_{m\in \mathcal{M}_k(a)}\!\! \epsilon_{am}^\prime (\hat{I}_{am}^{\text{RF}})^* \hat{\mathbf{g}}_{mk}^H \bm{\Theta}_{u,am}^\prime.
    \end{align} 
\end{itemize}
One can show that all the given terms in (\ref{Eq:Eq35}) are mutually uncorrelated. Thus, from the information theoretic perspective, to analyze the worst-case scenario, we assume that the effective noise is modeled by an equivalent Gaussian random variable. Thus, the uplink achievable data rates of the $k$th UE becomes
\begin{equation}\label{Eq:Eq40}
    R_{u,k} = \log_2\!\Big(\!1 + \text{SINR}_k\!\Big)~ [\text{bps/Hz}],
\end{equation}
where the $\text{SINR}_k$ is derived in the following Theorem.

\begin{theorem}\label{Theorem:Th3} The SINR of the $k$th UE is computed as follows
\begin{align}\label{Eq:Eq42}
    \normalfont{\text{SINR}}_k \!=\!
    \dfrac{\eta_k \Bigg[ \sum\limits_{a=1}^{A} \sum\limits_{m \in \mathcal{M}_{k}(a)} \!\!\! \bigg(\! \mu_m \mathcal{A}(k) \!+\! \mu_m^\prime \mathcal{A}^\prime(k) \!\bigg) \Bigg]^{2}}{ \splitfrac{ \sum\limits_{a=1}^{A} \sum\limits_{m \in \mathcal{M}_{k}(a)} \!\bigg[ \mu_m^2 \bigg(\!\eta_k \mathcal{B}(k) \!+\! \sum\limits_{k^\prime \neq k}^{K} \eta_{k^\prime} \mathcal{C}(k,k^\prime)\!\bigg)}{ \!+\! \mu_m^{\prime 2} \bigg(\!\eta_k \mathcal{B}^\prime(k) \!+\! \sum\limits_{k^\prime \neq k}^{K} \eta_{k^\prime} \mathcal{C}^\prime(k,k^\prime)\!\bigg) \bigg] \!+\! \mathcal{D}(k)}},
\end{align}
where
\begin{align*}
    &\rule{0pt}{15pt} \mathcal{A}(k) \!=\! \sqrt{\rho_u} \epsilon_{am} \mu_{am} \gamma_{am}^\text{FSO} \gamma_{mk},\\
    &\rule{0pt}{15pt} \mathcal{A}^\prime(k) \!=\! \sqrt{\rho_u} \epsilon_{am}^\prime \mu_{am}^\prime \gamma_{am}^\text{RF} \gamma_{mk},\\
    &\rule{0pt}{15pt}\mathcal{B}(k) \!=\! \rho_u \epsilon_{am}^2 \mu_{am}^2 (\gamma_{am}^\text{FSO})^2 \Big(\! \gamma_{mk} \!+\! 2\beta_{mk} \!\Big)\gamma_{mk}, \\ 
    &\rule{0pt}{15pt}\mathcal{B}^\prime(k) \!=\! \rho_u \epsilon_{am}^{\prime 2} \mu_{am}^{\prime 2} \gamma_{am}^\text{RF} \Big(\! \gamma_{am}^\text{RF}\beta_{mk} \!+\! \beta_{am} \gamma_{mk} \!+\! \beta_{am} \beta_{mk} \!\Big), \\
    &\rule{0pt}{15pt}\mathcal{C}(k,k^\prime) \!=\!\!
    2 \rho_u \epsilon_{am}^2 \mu_{am}^2 (\gamma_{am}^\text{FSO})^2 \Big(\!  \zeta_{mk}^2 \sigma_{p,m}^2  \!+\! {\gamma_{mk}^2}/{\beta_{mk}}\!\Big) \beta_{mk^\prime}, \\
    &\rule{0pt}{15pt}\mathcal{C}^\prime(k,k^\prime) \!=\!\\
    &~~\rho_u \epsilon_{am}^{\prime 2} \mu_{am}^{\prime 2} \gamma_{am}^\text{RF}\big(\!\gamma_{am}^\text{RF} \!+\! \beta_{am}\!\big)\Big(\!  \zeta_{mk}^2 \sigma_{p,m}^2  \!+\! {\gamma_{mk}^2}/{\beta_{mk}}\!\Big) \beta_{mk^\prime}, \\
     &\rule{0pt}{15pt}\mathcal{D}(k) \!=\! \sum\limits_{a=1}^{A} \sum\limits_{m\in \mathcal{M}_k(a)} \!\! \gamma_{mk} \Big(\! \epsilon_{am}^2 \gamma_{am}^\text{FSO}  {\Omega}_{u,am}^{2}\!+\! \epsilon_{am}^{\prime 2} \gamma_{am}^\text{RF}  {\Omega}_{u,am}^{\prime2} \!\Big).
\end{align*}

\end{theorem}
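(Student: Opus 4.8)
The plan is to invoke the standard use-and-then-forget (UatF) argument. Since the four terms in (\ref{Eq:Eq35}) are mutually uncorrelated and the effective noise is treated as the worst-case independent Gaussian, the rate (\ref{Eq:Eq40}) is governed by
\[
\text{SINR}_k = \frac{|\text{DS}_k|^2}{\mathbb{E}\{|\text{BU}_k|^2\} + \sum_{k^\prime \neq k}\mathbb{E}\{|\text{IUI}_{kk^\prime}|^2\} + \mathbb{E}\{|\text{N}_{k}|^2\}},
\]
so it suffices to evaluate the numerator and the three denominator second moments. I would first obtain $\text{DS}_k$ from (\ref{Eq:Eq36}) using the LMMSE orthogonality relations $\mathbb{E}\{\hat{\mathbf{g}}_{mk}^H\mathbf{g}_{mk}\}=\gamma_{mk}$ from (\ref{Eq:Eq12}) and $\mathbb{E}\{(\hat{I}_{am}^{\text{FSO}})^* I_{am}^{\text{FSO}}\}=\gamma_{am}^{\text{FSO}}$, $\mathbb{E}\{(\hat{I}_{am}^{\text{RF}})^* I_{am}^{\text{RF}}\}=\gamma_{am}^{\text{RF}}$ from (\ref{Eq:Eq18}) and (\ref{Eq:Eq23}), together with the mutual independence of the access, FSO, and RF channels. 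Substituting $J_{am}=\sqrt{\rho_u}\mu_m\mu_{am}$ and $J_{am}^\prime=\sqrt{\rho_u}\mu_m^\prime\mu_{am}^\prime$ factors out $\mathcal{A}(k)$ and $\mathcal{A}^\prime(k)$ and produces the numerator $\eta_k\big[\sum_a\sum_{m\in\mathcal{M}_k(a)}(\mu_m\mathcal{A}(k)+\mu_m^\prime\mathcal{A}^\prime(k))\big]^2$.

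For the denominator, the guiding principle is that contributions from distinct $(a,m)$ pairs are statistically independent, so each second moment of a sum collapses to a sum of per-pair second moments: in $\text{BU}_k$ the per-term mean is subtracted and cross-covariances vanish by independence, while in $\text{IUI}_{kk^\prime}$ each term is zero-mean because, under the orthogonal pilots of (\ref{Eq:Eq9}), $\hat{\mathbf{g}}_{mk}$ is independent of $\mathbf{g}_{mk^\prime}$ for $k^\prime\neq k$. The noise power is the most direct: since $\bm{\Theta}_{u,am}$ and $\bm{\Theta}_{u,am}^\prime$ are zero-mean with covariances ${\Omega}_{u,am}^2\mathbf{\textit{I}}_N$ and ${\Omega}_{u,am}^{\prime2}\mathbf{\textit{I}}_N$ from (\ref{Eq:Eq32}) and (\ref{Eq:Eq33}) and are independent of the channel estimates, conditioning on the channels gives $\mathbb{E}\{|\text{N}_{k}|^2\}=\mathcal{D}(k)$ immediately.

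The technical core is $\mathbb{E}\{|\text{BU}_k|^2\}$ and $\mathbb{E}\{|\text{IUI}_{kk^\prime}|^2\}$. For each $(a,m)$ I would split the access channel as $\mathbf{g}_{mk}=\hat{\mathbf{g}}_{mk}+\tilde{\mathbf{g}}_{mk}$ with the estimate orthogonal to the error, and use the Gaussianity of $\mathbf{h}_{mk}$ to evaluate the fourth-order access moments $\mathbb{E}\{|\hat{\mathbf{g}}_{mk}^H\mathbf{g}_{mk}|^2\}$ and $\mathbb{E}\{|\hat{\mathbf{g}}_{mk}^H\mathbf{g}_{mk^\prime}|^2\}$, which yield the factors $(\gamma_{mk}+2\beta_{mk})\gamma_{mk}$ and $(\zeta_{mk}^2\sigma_{p,m}^2+\gamma_{mk}^2/\beta_{mk})\beta_{mk^\prime}$. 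These are multiplied by the corresponding fronthaul second moments $\mathbb{E}\{|(\hat{I}_{am}^{\text{FSO}})^* I_{am}^{\text{FSO}}|^2\}$ and $\mathbb{E}\{|(\hat{I}_{am}^{\text{RF}})^* I_{am}^{\text{RF}}|^2\}$, the RF one following from the Rayleigh statistics of $h_{am}$ and the FSO one from the composite moments $\Gamma_{am}^n$ of (\ref{Eq:Eq17}) up to $n=4$. Grouping the FSO and RF parts under $\mu_m^2$ and $\mu_m^{\prime2}$, separating the $k^\prime=k$ (beamforming-uncertainty) contribution into $\mathcal{B}(k),\mathcal{B}^\prime(k)$ from the $k^\prime\neq k$ (interference) contributions $\mathcal{C}(k,k^\prime),\mathcal{C}^\prime(k,k^\prime)$, and adding $\mathcal{D}(k)$ assembles the denominator in (\ref{Eq:Eq42}).

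The main obstacle is the exact evaluation of the fourth-order FSO fronthaul moment $\mathbb{E}\{|(\hat{I}_{am}^{\text{FSO}})^* I_{am}^{\text{FSO}}|^2\}$ and its companions, since $I_{am}^{\text{FSO}}$ is non-Gaussian (log-normal turbulence combined with the bounded pointing error of (\ref{Eq:EqTotalF})) and therefore does not admit Gaussian moment factorization; one must instead push the whole computation through the closed-form moments $\Gamma_{am}^n$ while carefully tracking how the sub-optimal LMMSE estimate $\hat{I}_{am}^{\text{FSO}}=\zeta_{am}^\text{FSO}\tilde{y}_{p,am}$ correlates with the true channel and the pilot noise. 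A secondary difficulty is the bookkeeping that justifies the additive, cross-term-free split of the denominator into the $\mu_m^2$ and $\mu_m^{\prime2}$ groups: one must confirm that the distortion and additive noises accumulated over the two hops, together with the FSO and RF contributions that share the common access-channel factor $\hat{\mathbf{g}}_{mk}^H\mathbf{g}_{mk}$, are mutually uncorrelated as asserted after (\ref{Eq:Eq39}).
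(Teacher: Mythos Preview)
Your proposal is correct and follows essentially the same route as the paper: the UatF SINR quotient, the four-way split into $\text{DS}_k$, $\text{BU}_k$, $\text{IUI}_{kk'}$, $\text{N}_k$, the LMMSE orthogonality relations (\ref{Eq:Eq12}), (\ref{Eq:Eq18}), (\ref{Eq:Eq23}), the per-$(a,m)$ independence that collapses each second moment to a single sum, and the Gaussian fourth-order access-channel moments. The one point worth flagging is your stated ``main obstacle'': the paper does \emph{not} push the non-Gaussian FSO statistics through $\Gamma_{am}^n$ up to $n=4$ as you anticipate, but instead explicitly assumes perfect FSO channel estimation at the start of Appendix~\ref{App:App.1} (``It is assumed that the FSO channels are perfectly estimated, i.e., $\Gamma_{am}^2=\gamma_{am}^{\text{FSO}}$''), so that $(\hat{I}_{am}^{\text{FSO}})^*I_{am}^{\text{FSO}}=|I_{am}^{\text{FSO}}|^2$ and the awkward estimate--channel correlation you worry about disappears; the FSO contribution to $\mathcal{B}(k)$ and $\mathcal{C}(k,k')$ then comes out with the factor $(\gamma_{am}^{\text{FSO}})^2$ rather than a genuine fourth moment.
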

\begin{proof}See Appendix \ref{App:App.1}.
\end{proof}

\subsubsection{CF-mMIMO network} By use of $\mathcal{M}_k(a) \!=\! \mathcal{M}(a)$ in (\ref{Eq:Eq42}), the $\text{SINR}_k$ of the CF-mMIMO network is derived.

\section{Resource Allocation} \label{Sec:Sec5}
In this section, we firstly propose a cognitive fronthaul assignment algorithm used at each AP. Then, the UEs' and APs' optimal transmission powers are obtained to maximize the energy efficiencies (EEs) of the CF- and UC-mMIMO networks.

\subsection{Cognitive Fronthaul Assignment}
To assign a fronthaul link, e.g., FSO-only, RF-only, and RF--FSO, for the $m$th AP, Algorithm \ref{Alg:Alg.1} is proposed by use of the estimated channels and measured FSO beam at the corresponding AN. Under different FSO alignment and weather conditions, the values of $\epsilon_{am}$ and $\epsilon_{am}^\prime$ are assigned. For measuring the received FSO beam's alignment at an AN, a quadrant photodiode (QPD) with 4-quadrant detectors (4QD) is utilized \cite{kaymak2018survey}, c.f. Fig~\ref{Fig:Fig.4}\footnote{One can show that the bounds for the $r_s$ among the \emph{ideal} alignment and \emph{misalignment} are obtained as $w_z\!-\!r_a\!\leq\!r_s\!\leq\!w_z\!+\!r_a$. Therefore, we have $1\!-\!w_z^{-1}r_a\!\leq\!w_z^{-1}r_s\!\leq\!1\!+\!w_z^{-1}r_a$.}. Based on the measured FSO beam at the quadrants of the QPD, namely $q_1$, $q_2$, $q_3$, and $q_4$, we obtain the FSO alignment condition. Next, an approximate value of the attenuation coefficient, i.e., ${\hat{\gamma}}$, is derived in (\ref{Eq:EqALgorithmPE}) to estimate the weather condition. Finally, the fronthaul assignment indices are acquired. 
\begin{figure}[t!]
    \centering
    \pstool[scale=0.63]{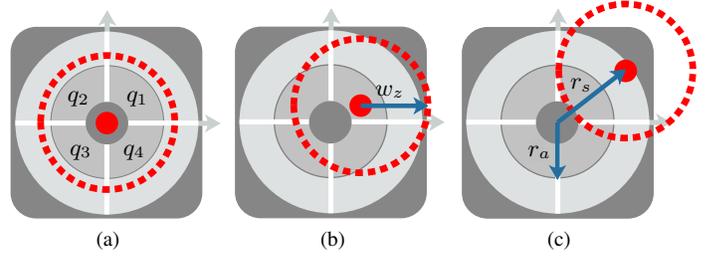}{
    \psfrag{A}{\hspace{-0.00cm}\small $q_1$}
    \psfrag{B}{\hspace{-0.05cm}\small $q_2$}
    \psfrag{C}{\hspace{-0.00cm}\small $q_3$}
    \psfrag{D}{\hspace{-0.05cm}\small $q_4$}
    \psfrag{Wz}{\hspace{-0.00cm}\small $w_z$}
    \psfrag{rs}{\hspace{-0.05cm}\small $r_s$}
    \psfrag{ra}{\hspace{-0.05cm}\small $r_a$}
    \psfrag{T1}{\hspace{-0.00cm}\footnotesize (a)}
    \psfrag{T2}{\hspace{-0.00cm}\footnotesize (b)}
    \psfrag{T3}{\hspace{-0.00cm}\footnotesize (c)}
    }
    \vspace{-3mm}
    \caption{The (a) poor, (b) moderate, and (c) good FSO alignments at the $a$th AN's QPD. The bold gray and red dashed circles represent FSO aperture and beam waist, respectively.}
    \label{Fig:Fig.4}
\end{figure}
\begin{algorithm}[t!]
\DontPrintSemicolon
    \caption{Cognitive fronthaul assignment} \label{Alg:Alg.1}
    \KwInput{FSO parameters $\xi$, $\delta_{l,am}$, $I_{l,am}^{\prime}$; attenuation factors of the supposed weather conditions, e.g., $\gamma$; fronthaul length $d_{am}$; mean-squares of the estimated channels $ \gamma_{am}^{\text{FSO}}$, $ \gamma_{am}^{\text{RF}}$; received signals at the QPD $q_1$, $q_2$, $q_3$, $q_4$. Initial parameters $r_s$, $I_0$.}
    \KwOutput{The fronthaul assignment indices: $\epsilon_{am}$, $\epsilon_{am}^\prime$.}
    \lIf{{$q_i\!\gg\! \sum_{\substack{j\!=\!1 \\ j \neq i}}^{4} q_j, \forall i$ or $q_i \!\simeq\! q_j \!\simeq\!0, \forall i\!\neq\!j$}}{poor alignment;
    }
    \lElseIf{$q_i \!\simeq\! q_j \!\gg\!0, \forall i\!\neq\!j$}{good alignment;
    }
    \lElse{moderate alignment.
    }
    Update $I_0$ using $r_s$.\\
    Using (\ref{Eq:EqPL})--(\ref{Eq:EqPE}), estimate $\hat{\gamma}$ as below
    \begin{align}\label{Eq:EqALgorithmPE}
        {\hat{\gamma}} 
        = \dfrac{1}{d_{am}}\ln\!\left(\!\dfrac{\sqrt{(\xi^2\!+\!2)}\sqrt{\gamma_{am}^\text{FSO}}}{\xi I_0 I_{l,am}^{\prime}}\!\right) \!-\! \dfrac{2\delta_{l,am}^2}{d_{am}}.
    \end{align}\\
     Compare the ${\hat{\gamma}}$ with the $\gamma$ of different weather conditions \textit{$\rightarrow$ clear, rainy, snowy, or foggy}.\\
    Assign default $\epsilon_{am}\!=\!1$, $\epsilon_{am}^\prime\!=\!0$.\\
    \lIf{$\gamma_{am}^\text{RF}\!\leq\!\gamma_{am}^\text{FSO}$}{\textbf{goto} {\scriptsize{\textbf{16}}}.}
    \lIf{poor alignment}{$\epsilon_{am}\!=\!0$, $\epsilon_{am}^\prime\!=\!1$;}
    \ElseIf{moderate alignment}{
        \lIf{snowy or foggy weather}{$\epsilon_{am}\!=\!0$, $\epsilon_{am}^\prime\!=\!1$;}
        \lElse{$\epsilon_{am}\!=\!\epsilon_{am}^\prime\!=\!1$;}}
    \Else{
        \lIf{snowy weather}{$\epsilon_{am}\!=\!\epsilon_{am}^\prime\!=\!1$;}
        \lElseIf{foggy weather}{$\epsilon_{am}\!=\!0$, $\epsilon_{am}^\prime\!=\!1$.}}
    \Return $\epsilon_{am}$, $\epsilon_{am}^\prime$; stop the process.
\end{algorithm}

\subsection{Power Allocation} Our goal is to maximize the CF- and UC-mMIMO networks' EEs by optimally allocating the UEs' and APs' transmission powers, subject to maximum power constraints. Thereafter, the optimization problem is presented by
\begin{equation}\label{Eq:Eq43}
\begin{aligned}
    &\mathcal{P}_1:
    \underset{\substack{\{\eta_k \geq 0\},\\
    \{P_m \geq 0,\, P_m^\prime \geq 0\}
    }}{\text{maximize}}~ \text{EE} \\
    &\text{subject to} \begin{cases}
    \mathcal{C}_1: \eta_k \leq 1,\hspace{0.1cm} \forall k,\\
    \mathcal{C}_2: P_m \leq P_{\normalfont\text{max}}^{\normalfont\text{FSO}},\hspace{0.1cm} \forall m,\\
    \mathcal{C}_3: P_m^\prime \leq P_{\normalfont\text{max}}^{\normalfont\text{RF}}, \hspace{0.1cm} \forall m,
    \end{cases}
\end{aligned}
\end{equation}
where $P_m$ and $P_m^\prime$ are transmission powers at the $m$th AP's FSO and RF terminals, respectively. Also, $P_{\normalfont\text{max}}^{\normalfont\text{FSO}}$ denotes the maximum FSO transmission power, satisfying the eye-safety regulations, and $P_{\normalfont\text{max}}^{\normalfont\text{RF}}$ represents the maximum RF transmission power at each AP. The EE of the network is defined as
\begin{equation}\label{Eq:Eq44}
    \text{EE} = \frac{\tau - \tau_p}{\tau} \!\left(\!\dfrac{\text{BW}_0 \sum\limits_{k=1}^{K} \log_2\!\Big(\!1 + \text{SINR}_k\!\Big)}{P_{\text{net}}}\!\right)\!,
\end{equation}
such that
\begin{align}\label{Eq:Eq45}
    P_{\text{net}} = 
    \underbrace{\rho_u \sum\limits_{k=1}^{K}\eta_k}_{\text{UEs}}
    +\! \underbrace{\sum\limits_{m=1}^{M}\!\! \Big(\!P_{c,m} \!+\! P_{\text{fh},m}\!\Big)}_{\text{APs and fronthaul links}} + \underbrace{ \sum\limits_{a=1}^{A}\Big(\!P_{c,a} \!+\! P_{\text{bh},a}\!\Big)}_{\text{ANs and backhaul links}},
\end{align}
where $P_{c,m}$ and $P_{c,a}$ denote the circuit powers at the $m$th AP and $a$th AN, respectively. Besides, $P_{\text{fh},m}$ indicates the power consumed by the $m$th fronthaul link, and $P_{\text{bh},a}$ depicts the consumed power at the $a$th backhaul link. The sum of the second and third terms in (\ref{Eq:Eq45}) is named as $P_\text{fbh}$ for next usages.

Also, $P_m$, given in $\mathcal{C}_2$, is computed as
\begin{align}\label{Eq:Eq48} \nonumber
    P_m &= \mathbb{E}\Big\{\!{\bm{x}}_{u,am}^{\!H} {\bm{x}}_{u,am}\!\Big\} \\
    &= \mu_m^2 {\rho_u} \sum\limits_{k=1}^{K} {\eta_k} \beta_{mk} \!+\! \mu_m^2 \sigma_{u,m}^2 \!+\! \delta_{d,m}^2 \!\leq\! P_\text{max}^\text{FSO}, \hspace{0.1cm} \forall m.
\end{align}
Therefore, we have
\begin{align}\label{Eq:Eq49}
    \mu_m^2 \Bigg(\! {\rho_u} \sum\limits_{k=1}^{K} {\eta_k} \beta_{mk} \!+\! \sigma_{u,m}^2 \!\Bigg) \!\leq\! P_\text{max}^\text{FSO} \!-\! \delta_{d,m}^2  \!\leq\! P_\text{max}^\text{FSO}, \hspace{0.05cm} \forall m.
\end{align}
Without loss of generality, by replacing $P_{\normalfont\text{max}}^{\normalfont\text{FSO}} \!-\! \delta_{d,m}^2$ with $P_{\normalfont\text{max}}^{\normalfont\text{FSO}}$, the feasible set becomes larger or equivalently we can assume higher maximum power. It is also worth noting that, numerical result indicates that the value of $\delta_{d,m}^2$ is negligible compared to $ P_{\normalfont\text{max}}^{\normalfont\text{FSO}}$. Similar to (\ref{Eq:Eq48})--(\ref{Eq:Eq49}), $\mathcal{C}_3$ is rewritten as
\begin{align}\label{Eq:Eq50}
    \mu_m^{\prime 2} \Bigg(\! {\rho_u} \sum\limits_{k=1}^{K} {\eta_k} \beta_{mk} \!+\! \sigma_{u,m}^2 \!\Bigg)  \!\leq\! P_\text{max}^\text{RF}, \hspace{0.1cm} \forall m.
\end{align}
According to (\ref{Eq:Eq49})--(\ref{Eq:Eq50}), finding the optimal values of $\mu_m$ and $\mu_m^\prime$ is equivalent to obtain the optimal values of $P_m$ and $P_m^\prime$. Therefore, the optimization problem $\mathcal{P}_1$ is reformulated as
\begin{equation}\label{Eq:Eq51}
\begin{aligned}
    &\mathcal{P}_2:
    \underset{\substack{\{\eta_k \geq 0\},\\
    \{\mu_m \geq 0,\, \mu_m^\prime \geq 0\}
    }}{\text{maximize}}~ \frac{\tau - \tau_p}{\tau} \!\left(\!\dfrac{\text{BW}_0}{P_0}\!\!\right)  \sum\limits_{k=1}^{K} \log_2\!\Big(\!1 + \text{SINR}_k\!\Big) \\
    &\text{subject to} \begin{cases}
    \mathcal{C}_1: \eta_k \leq 1,\hspace{0.1cm} \forall k,\\
    \mathcal{C}_2: (\ref{Eq:Eq49}), ~~
    \mathcal{C}_3: (\ref{Eq:Eq50}), ~~
    \mathcal{C}_4: P_\text{net} \leq P_0.
    \end{cases}
\end{aligned}
\end{equation}

Since the optimization problem $\mathcal{P}_2$ is a non-convex and NP-hard problem, we propose two approaches to solve it; high-SINR approximation and the equivalent W-MMSE method. Since $\frac{\tau - \tau_p}{\tau}\!\left(\!\frac{\text{BW}_0}{P_0}\!\!\right)$ is constant, it can be removed from the objective function.

\subsubsection{High-SINR approximation}
By applying the high-SINR approximation, the objective function is approximated with $\frac{\tau - \tau_p}{\tau} \!\left(\!\frac{\text{BW}_0}{P_0}\!\!\right) \log_2\!\Big( \prod\limits_{k=1}^{K}  \text{SINR}_k\!\Big)$. Moreover, because $\log_2(x) \!\leq\! x$, $\forall x\!\geq\!0$, the optimization problem is reformulated as follows
\begin{equation}\label{Eq:Eq52}
\begin{aligned}
    &\mathcal{P}_3:
    \underset{\substack{\{\eta_k \geq 0\}, \{w_k \geq 0\},\\
    \{\mu_m \geq 0,\, \mu_m^\prime \geq 0\}
    }}{\text{maximize}}~ \prod\limits_{k=1}^{K} w_k \\
    &\text{subject to} \begin{cases}
    \mathcal{C}_1: \eta_k \leq 1,\hspace{0.1cm} \forall k,\\
    \mathcal{C}_2: (\ref{Eq:Eq49}), ~~
    \mathcal{C}_3: (\ref{Eq:Eq50}), ~~
    \mathcal{C}_4: P_\text{net} \leq P_0, \\
    \mathcal{C}_5: w_k \leq \text{SINR}_k,\hspace{0.1cm} \forall k.
    \end{cases}
\end{aligned}
\end{equation}
By inserting (\ref{Eq:Eq42}) into $\mathcal{P}_3$ and further algebraic manipulations, the optimization problem becomes a geometric program (GP) since the objective function has a monomial form, and all the inequality constraints are posynomial functions, less than or equal one. Thus, $\mathcal{P}_3$ is solved by using MOSEK in CVX \cite{masoumi2019performance}.

\subsubsection{The W-MMSE method}
We can derive an equivalent form of the objective function and break the optimization problem $\mathcal{P}_2$ into subproblems to be solved sequentially \cite{8585131}. To this end, we have 
\begin{equation}\label{Eq:EqWMMSE1}
\begin{aligned}
    &\mathcal{P}_4:
    \underset{\substack{\{\varrho_k \geq 0\}, \{\mu_m \geq 0,\, \mu_m^\prime \geq 0\},\\
    \{\vartheta_k \geq 0\}, \{u_k\}
    }}{\text{minimize}}~ \sum\limits_{k=1}^{K} \Big(\!\vartheta_ke_k \!- \ln(\vartheta_k) \!\Big) \\
    &\text{subject to} \begin{cases}
    \mathcal{C}_1: \varrho^2_k \leq 1,\hspace{0.1cm} \forall k,\\
    \mathcal{C}_2: (\ref{Eq:Eq49}), ~~
    \mathcal{C}_3: (\ref{Eq:Eq50}), ~~
    \mathcal{C}_4: P_\text{net} \leq P_0,
    \end{cases}
\end{aligned}
\end{equation}
where $\varrho^2_k=\eta_k$, and $e_k$ is given in the following Theorem.
\begin{theorem}\label{Theorem:Th4} The mean-square error is computed as
\begin{align}\label{Eq:EqWMMSE2} \nonumber
    &e_k =
    |u_k|^2 \sum\limits_{k^\prime=1}^{K} \varrho^2_{k^\prime} \sum\limits_{a=1}^{A} \sum\limits_{m\in \mathcal{M}_k(a)}\!\!\! \bigg(\! \mu_m^{2} \mathcal{C}(k,k^\prime) \!+\! \mu_m^{\prime 2} \mathcal{C}^\prime(k,k^\prime) \!\bigg)\\ \nonumber
    &~+ 2 |u_k|^2 \mathfrak{Re}\Bigg\{\! \varrho^2_{k} \sum\limits_{a=1}^{A} \sum\limits_{m\in \mathcal{M}_k(a)}\!\!\! \mu_m \mu_m^\prime \mathcal{A}(k)\mathcal{A}^\prime(k)\\ 
    &~- \frac{\varrho_{k}}{|u_k|} \sum\limits_{a=1}^{A} \sum\limits_{m\in \mathcal{M}_k(a)}\!\!\! \bigg(\! \mu_m \mathcal{A}(k) \!+\! \mu_m^\prime  \mathcal{A}^\prime(k)\!\bigg)\!\!\Bigg\} \!+\! |u_k|^2 \mathcal{D}(k) \!+\!1.
\end{align}
\end{theorem}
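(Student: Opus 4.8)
The plan is to read $e_k$ as the mean-square error of the single-tap receive estimate $u_k r_{u,k}$ of the transmitted symbol $s_k$, that is $e_k = \mathbb{E}\{|u_k r_{u,k} - s_k|^2\}$, where $r_{u,k}$ is the combined signal of (\ref{Eq:Eq34})--(\ref{Eq:Eq35}), $u_k$ is the receive scalar, and the power coefficient is written as $\eta_k=\varrho_k^2$. Expanding the squared modulus gives three pieces, $e_k = |u_k|^2\,\mathbb{E}\{|r_{u,k}|^2\} - 2\,\mathfrak{Re}\{u_k\,\mathbb{E}\{r_{u,k}s_k^*\}\} + \mathbb{E}\{|s_k|^2\}$, and since $s_k\sim\mathcal{CN}(0,1)$ the last term is the constant $1$ appearing in (\ref{Eq:EqWMMSE2}). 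The whole argument then reduces to evaluating the desired cross-correlation $\mathbb{E}\{r_{u,k}s_k^*\}$ and the received power $\mathbb{E}\{|r_{u,k}|^2\}$, and both are assembled from second moments already produced in the proof of Theorem \ref{Theorem:Th3} (Appendix \ref{App:App.1}).

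For the cross term I would use that the data symbols, the estimation noises, the distortion noises and the additive noises are mutually uncorrelated and zero-mean, as established when writing (\ref{Eq:Eq35}). Hence only the $s_k$-bearing part of $r_{u,k}$ correlates with $s_k$, so $\mathbb{E}\{r_{u,k}s_k^*\}$ equals exactly the desired coefficient $\text{DS}_k$ of (\ref{Eq:Eq36}), namely $\varrho_k\sum_{a=1}^{A}\sum_{m\in\mathcal{M}_k(a)}\big(\mu_m\mathcal{A}(k)+\mu_m^\prime\mathcal{A}^\prime(k)\big)$. Substituting this and factoring a common $|u_k|^2$ out of the real part yields the term $2|u_k|^2\,\mathfrak{Re}\{-\tfrac{\varrho_k}{|u_k|}\sum_a\sum_m(\mu_m\mathcal{A}(k)+\mu_m^\prime\mathcal{A}^\prime(k))\}$ in (\ref{Eq:EqWMMSE2}).

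The remaining work is the received power $\mathbb{E}\{|r_{u,k}|^2\}$. Here I would expand $r_{u,k}$ as a superposition over all UEs $k^\prime$ plus the aggregate effective noise, and compute the second moment branch by branch. Using the independence of the small-scale fading across distinct APs, the cross-AP terms factor through their means, while the per-AP terms give the per-branch powers that coincide with $\mathcal{C}(k,k^\prime)$ for the FSO branch and $\mathcal{C}^\prime(k,k^\prime)$ for the RF branch, now summed over every $k^\prime$ (including $k^\prime=k$). The noise and distortion contribution collapses to $\mathcal{D}(k)$ of Theorem \ref{Theorem:Th3}. The one term that must be tracked separately is the correlation between the FSO and RF fronthaul branches of the same AP: both carry the same symbol through the same physical channel $\mathbf{g}_{mk}$, so their product does not vanish and contributes $2\varrho_k^2\sum_a\sum_m\mu_m\mu_m^\prime\mathcal{A}(k)\mathcal{A}^\prime(k)$; this coherent cross term survives only for $k^\prime=k$, because $\hat{\mathbf{g}}_{mk}^H\mathbf{g}_{mk}$ has nonzero mean whereas $\hat{\mathbf{g}}_{mk}^H\mathbf{g}_{mk^\prime}$ with $k^\prime\neq k$ does not. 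Collecting the power, cross, and constant pieces reproduces (\ref{Eq:EqWMMSE2}).

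I expect the intra-AP FSO--RF cross-correlation to be the main obstacle, since it is the only place where the two fronthaul branches are statistically coupled, and obtaining its single-sum form (rather than the double sum one would naively read off from $|\text{DS}_k|^2$) requires carefully separating the coherent mean of $\hat{\mathbf{g}}_{mk}^H\mathbf{g}_{mk}$ from its fluctuation and matching it against the $\mathcal{B}$--$\mathcal{C}$ bookkeeping of Theorem \ref{Theorem:Th3}. Everything else is a re-assembly of moments that have already been evaluated there.
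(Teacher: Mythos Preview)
Your proposal is correct and follows essentially the same route as the paper. The paper's Appendix~\ref{App:App.2} also starts from $e_k=\mathbb{E}\{|u_k r_{u,k}-s_k|^2\}$, substitutes the explicit form of $r_{u,k}$, and then says ``after some algebraic manipulations'' obtains (\ref{Eq:EqWMMSE2}); the only cosmetic difference is that the paper groups the $k$th-user signal together with the $-1$ before squaring (i.e.\ it writes $|u_k\cdot(\text{signal}_k)-1|^2+|u_k|^2\cdot(\text{interference}+\text{noise})$), whereas you use the equivalent three-term expansion $|u_k|^2\mathbb{E}\{|r_{u,k}|^2\}-2\mathfrak{Re}\{u_k\mathbb{E}\{r_{u,k}s_k^*\}\}+1$ and then harvest the moments from Appendix~\ref{App:App.1}. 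Your flagging of the intra-AP FSO--RF cross term as the only nontrivial piece is exactly the right place to be careful.
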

\begin{proof} See Appendix \ref{App:App.2}.
\end{proof}
The optimization problem $\mathcal{P}_4$ is convex for each of the block variables $\varrho_k$, $\eta_k$, $\vartheta_k$, $\mu_m$, and $\mu_m^\prime$. Thus, we can apply the block coordinate descent method to solve $\mathcal{P}_4$ \cite{shi2011iteratively}. In this regard, by fixing four of the optimization variables and updating the fifth one based on the closed-form expressions obtained in Algorithm \ref{Alg:Alg.2}, the proposed algorithm sequentially converges to the stationary points $\varrho_k^*$, $\eta_k^*$, $\vartheta_k^*$, $\mu_m^*$, and $\mu_m^{\prime *}$\footnote{According to the standard steps illustrated in \cite[Theorems 1 and 3]{shi2011iteratively}, one can show that $\varrho_k^*$ is also a stationary point of $\mathcal{P}_1$.}. 
\begin{algorithm}[t!]
\DontPrintSemicolon
    \caption{Sequential solution for $\mathcal{P}_4$} \label{Alg:Alg.2}
    \KwInput{RF large-scale fadings $\beta_{mk}$,$\forall k,m$, $\beta_{am}$, $\forall m,a$; FSO intensities $\Gamma_{am}^2$, $\forall m,a$; power factors $\rho_u$, $P_\text{max}^\text{FSO}$, $P_\text{max}^\text{RF}$, $P_0$, $P_\text{fbh}$; noise variances $\sigma_{u,m}^2$, $\forall m$. Stopping accuracy $\varepsilon$. Initial coefficients $\varrho_k^{(0)}$, $\forall k$, $\mu_m^{(0)}$, $\mu_m^{\prime (0)}$, $\forall m$.}
    \KwOutput{The optimal solutions: $\varrho_k^{*}\!=\!\varrho_k^{(n)}$, $\forall k$, $\mu_m^{*}\!=\!\mu_m^{(n)}$, $\mu_m^{\prime *}\!=\!\mu_m^{\prime (n)}$, $\forall m$.}
    \textit{Iteration} $n$:\\
     Update $u_k^{(n)}$ using (\ref{Eq:EqWMMSE3}).\\
     Update $\vartheta_k^{(n)}\!=\!\big(e_k^{(n)}\big)^{\!-1}$, where $e_k^{(n)}$ is defined in (\ref{Eq:EqWMMSE4}).\\
    Update $\varrho_k^{(n)}$ as follows
    \begin{align*}
        \varrho_k^{(n)} &\!=\! \operatorname{min}\!\Bigg\{\! \tilde{\varrho}_k^{(n)},1, \sqrt{\frac{P_\text{max}^\text{FSO}\!-\!\!\sum_{m=1}^{M}\!\big(\mu_m^{(n-1)})^2 \sigma_{u,m}^2}{\rho_u \sum_{m=1}^{M}\!\big(\mu_m^{(n-1)})^2 \beta_{mk}}},\\ &\sqrt{\frac{P_\text{max}^\text{RF}\!-\!\!\sum_{m=1}^{M}\!\big(\mu_m^{\prime (n-1)})^2 \sigma_{u,m}^2}{\rho_u \sum_{m=1}^{M}\!\big(\mu_m^{\prime (n-1)})^2 \beta_{mk}}}, \sqrt{\frac{P_0 -\! P_\text{fbh}}{\rho_u}}  \Bigg\},
    \end{align*}
    where $\tilde{\varrho}_k^{(n)}$ is obtained as in (\ref{Eq:EqWMMSE5}).\\
     Update $\mu_m^{(n)}$ as below
    \begin{align*}
        \mu_m^{(n)} \!=\! \operatorname{min}\!\Bigg\{\!\tilde{\mu}_m^{(n)}, \sqrt{\frac{P_\text{max}^\text{FSO}}{\rho_u \sum_{k=1}^{K}\! \big(\varrho_k^{(n)}\big)^2 \beta_{mk} \!+\! \sigma_{u,m}^2}} \Bigg\},
    \end{align*}
    where $\tilde{\mu}_m^{(n)}$ is computed as in (\ref{Eq:EqWMMSE6}).\\
    Update $\mu_m^{\prime (n)}$ as follows
    \begin{align*}
    \mu_m^{\prime (n)} \!=\! \operatorname{min}\!\Bigg\{\!\tilde{\mu}_m^{\prime (n)}, \sqrt{\frac{P_\text{max}^\text{RF}}{\rho_u \sum_{k=1}^{K}\! \big(\varrho_k^{(n)}\big)^2 \beta_{mk} \!+\! \sigma_{u,m}^2}} \Bigg\},
    \end{align*}
    where $\tilde{\mu}_m^{\prime (n)}$ is derived in (\ref{Eq:EqWMMSE7}).\\
    \lIf{Stopping criterion $\Big|\sum\limits_{k=1}^K\!\! R_{u,k}^{(n)}-\!\sum\limits_{k=1}^K\!\! R_{u,k}^{(n-1)}\Big|\!\leq\! \varepsilon$}{stop the process;
    }
    \lElse{\textbf{goto} {\scriptsize{\textbf{9}}}.
    }
     Save the obtained solution: $\varrho_k^{(n)}$, $\forall k$, $\mu_m^{(n)}$, $\mu_m^{\prime (n)}$, $\forall m$. Set $n\!=\!n\!+\!1$, then \textbf{goto} \scriptsize{\textbf{1}}.
\end{algorithm}
\begin{figure*}[!t]
\normalsize
\setcounter{mytempeqncnt3}{\value{equation}}
\setcounter{equation}{48}
\begin{align}\label{Eq:EqWMMSE3}
    u_k^{(n)} =  \dfrac{\varrho_{k}^{(n-1)} \sum\limits_{a=1}^{A} \sum\limits_{m\in \mathcal{M}_k(a)}\!\!\! \bigg(\! \mu_m^{(n-1)} \mathcal{A}(k) \!+\! \mu_m^{\prime (n-1)}  \mathcal{A}^\prime(k)\!\bigg)}{\splitfrac{\sum\limits_{k^\prime=1}^{K} \big(\varrho^{(n-1)}_{k^\prime}\big)^2 \sum\limits_{a=1}^{A} \sum\limits_{m\in \mathcal{M}_k(a)}\!\!\! \bigg(\! \big(\mu_m^{(n-1)}\big)^2 \mathcal{C}(k,k^\prime) \!+\!  \big(\mu_m^{\prime (n-1)}\big)^2 \mathcal{C}^\prime(k,k^\prime) \!\bigg)}{ \!+\! 2\big(\varrho^{(n-1)}_{k}\big)^2 \sum\limits_{a=1}^{A} \sum\limits_{m\in \mathcal{M}_k(a)}\!\!\! \mu_m^{(n-1)} \mu_m^{\prime (n-1)} \mathcal{A}(k)\mathcal{A}^\prime(k) \!+\! \mathcal{D}(k)}}.
\end{align}
\begin{align}\label{Eq:EqWMMSE4}
    e_k^{(n)} &=
    |u_k^{(n)}|^2 u_k^{(n-1)} -2 \varrho_{k}^{(n-1)} \mathfrak{Re}\Bigg\{\!u_k^{(n)} \sum\limits_{a=1}^{A} \sum\limits_{m\in \mathcal{M}_k(a)}\!\!\! \bigg(\! \mu_m \mathcal{A}(k) \!+\! \mu_m^\prime  \mathcal{A}^\prime(k)\!\bigg)\!\!\Bigg\} +1.
\end{align}
\begin{align}\label{Eq:EqWMMSE5}
    \tilde{\varrho}_k^{(n)} =
    \dfrac{\vartheta_k^{(n)} \mathfrak{Re}\Bigg\{\! u_k^{(n)} \sum\limits_{a=1}^{A} \sum\limits_{m\in \mathcal{M}_k(a)}\!\!\! \bigg(\! \big(\mu_m^{(n-1)}\big)^{2} \mathcal{A}(k) \!+\! \big(\mu_m^{\prime (n-1)}\big)^{2}  \mathcal{A}^\prime(k)\!\bigg)\!\!\Bigg\}}{
    \splitfrac{ \sum\limits_{k^\prime=1}^{K} \vartheta_{k^\prime}^{(n)} |u_{k^\prime}^{(n)}|^2 \sum\limits_{a=1}^{A} \sum\limits_{m\in \mathcal{M}_k(a)}\!\!\! \bigg(\! \big(\mu_m^{(n-1)}\big)^{2} \mathcal{C}(k,k^\prime) \!+\! \big(\mu_m^{\prime (n-1)}\big)^{2} \mathcal{C}^\prime(k,k^\prime) \!\bigg)} {\!+\! 2 \vartheta_k^{(n)} |u_k^{(n)}|^2 \mathfrak{Re}\Bigg\{\!  \sum\limits_{a=1}^{A} \sum\limits_{m\in \mathcal{M}_k(a)}\!\!\! \big(\mu_m^{ (n-1)}\big)^{2} \big(\mu_m^{\prime (n-1)}\big)^{2} \mathcal{A}(k)\mathcal{A}^\prime(k)\!\Bigg\}}}.
\end{align}
\begin{align}\label{Eq:EqWMMSE6}
    \tilde{\mu}_m^{(n)} \!=\!\dfrac{\vartheta_k^{(n)} \varrho_{k}^{(n)} \mathfrak{Re}\Bigg\{\!u_k^{(n)} \sum\limits_{a=1}^{A} \sum\limits_{m\in \mathcal{M}_k(a)}\!\! \mathcal{A}(k)\!\Bigg\} \!-\! \vartheta_k^{(n)} |u_k^{(n)}|^2 \big(\varrho_{k}^{(n)}\big)^2 \mathfrak{Re}\Bigg\{\!  \sum\limits_{a=1}^{A} \sum\limits_{m\in \mathcal{M}_k(a)}\!\! \mu_m^{\prime (n-1)} \mathcal{A}(k)\mathcal{A}^\prime(k)\!\Bigg\}}{\sum\limits_{k^\prime=1}^{K} \vartheta_{k^\prime}^{(n)} |u_{k^\prime}^{(n)}|^2 \big(\varrho_{k^\prime}^{(n)}\big)^2 \sum\limits_{a=1}^{A} \sum\limits_{m\in \mathcal{M}_k(a)}\!\! \mathcal{C}(k,k^\prime)}.
\end{align}
\begin{align}\label{Eq:EqWMMSE7}
    \tilde{\mu}_m^{\prime (n)} \!=\!\dfrac{\vartheta_k^{(n)} \varrho_{k}^{(n)} \mathfrak{Re}\Bigg\{\!u_k^{(n)} \sum\limits_{a=1}^{A} \sum\limits_{m\in \mathcal{M}_k(a)}\!\! \mathcal{A}^\prime(k)\!\Bigg\} \!-\! \vartheta_k^{(n)} |u_k^{(n)}|^2 \big(\varrho_{k}^{(n)}\big)^2 \mathfrak{Re}\Bigg\{\!  \sum\limits_{a=1}^{A} \sum\limits_{m\in \mathcal{M}_k(a)}\!\! \mu_m^{(n)} \mathcal{A}(k)\mathcal{A}^\prime(k)\!\Bigg\}}{\sum\limits_{k^\prime=1}^{K} \vartheta_{k^\prime}^{(n)} |u_{k^\prime}^{(n)}|^2 \big(\varrho_{k^\prime}^{(n)}\big)^2 \sum\limits_{a=1}^{A} \sum\limits_{m\in \mathcal{M}_k(a)}\!\! \mathcal{C}^\prime(k,k^\prime)}.
\end{align}
\setcounter{equation}{\value{mytempeqncnt3}}
\hrulefill
\vspace*{4pt}
\end{figure*}
\setcounter{equation}{53}

\section{Numerical Results and Discussions} \label{Sec:Sec6}
Through this section, we present numerical results to highlight the advantageous of the optimally allocating the transmission powers and applying the cognitive fronthaul assignments. Conventionally, it is assumed that $K\!=\!20$ UEs and $M\!=\!200$ APs are uniformly distributed within an area of $D \!=\! 1\times1\, [\text{km}^2]$. Also, $A\!=\!4$ ANs are uniformly spread over a circle with radius (backhaul length) of $300\, [\text{m}]$ and the constant angle of $\pi/2\, [\text{rad}]$ between adjacent ANs, Fig. \ref{Fig:Fig.6}. For the UC-mMIMO, we assume that $|\mathcal{K}(m)|\!=\!10$ UEs are served by each AP. For the good, moderate, and poor FSO alignments, we initiate $w_z^{-1} r_s\!\leq\!0.8$, $0.8\!\leq\!w_z^{-1} r_s\!\leq\!1.2$, and $w_z^{-1} r_s\!\geq\!1.2$, respectively. Besides, the clear, rainy, snowy, and foggy weather conditions have the attenuation coefficients of $0.44$, $0.523$, $4.53$, and $50\, [\text{dB/km}]$, respectively \cite{ahmed2018c}.
\begin{figure}[t!]
    \centering
    \subfloat{
    \pstool[scale=0.54]{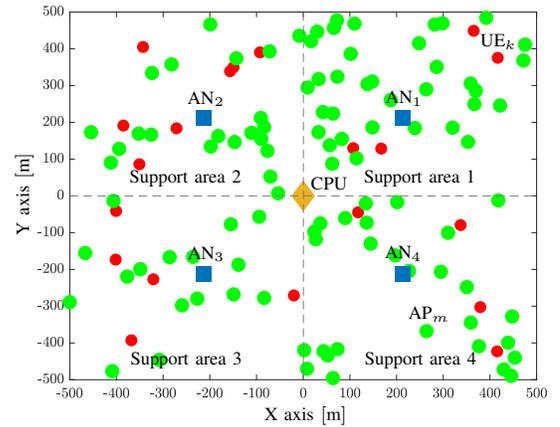}{
    \psfrag{Xaxis}{\scriptsize \hspace{-0.35cm} X axis $[\text{m}]$}
    \psfrag{Yaxis}{\scriptsize \hspace{-0.35cm} Y axis $[\text{m}]$}
    \psfrag{CPU}{\scriptsize \hspace{-0.1cm} CPU}
    \psfrag{AN1}{\scriptsize \hspace{-0.15cm} $\text{AN}_1$}
    \psfrag{AN2}{\scriptsize \hspace{-0.15cm} $\text{AN}_2$}
    \psfrag{AN3}{\scriptsize \hspace{-0.15cm} $\text{AN}_3$}
    \psfrag{AN4}{\scriptsize \hspace{-0.15cm} $\text{AN}_4$}
    \psfrag{APm}{\scriptsize \hspace{-0.15cm} $\text{AP}_m$}
    \psfrag{UEk}{\scriptsize \hspace{-0.15cm} $\text{UE}_k$}
    \psfrag{Supportarea1}{\scriptsize \hspace{-0.45cm} Support area 1}
    \psfrag{Supportarea2}{\scriptsize \hspace{-0.05cm} Support area 2}
    \psfrag{Supportarea3}{\scriptsize \hspace{-0.05cm} Support area 3}
    \psfrag{Supportarea4}{\scriptsize \hspace{-0.45cm} Support area 4}
    \psfrag{0}{\tiny $0$}
    \psfrag{100}{\tiny $100$}
    \psfrag{200}{\tiny $200$}
    \psfrag{300}{\tiny $300$}
    \psfrag{400}{\tiny $400$}
    \psfrag{500}{\tiny $500$}
    \psfrag{-100}{\tiny -$100$}
    \psfrag{-200}{\tiny -$200$}
    \psfrag{-300}{\tiny -$300$}
    \psfrag{-400}{\tiny -$400$}
    \psfrag{-500}{\tiny -$500$}
    }
    }
    \caption{A distribution shot for the UEs, APs, ANs, and CPU, wherein each AN merely serves the APs located in its coverage area.}
    \label{Fig:Fig.6}
\end{figure}
\noindent
\renewcommand{\arraystretch}{1}
\begin{table}[!t]
\begin{center}
\caption{Network parameters for numerical results.}\label{Table:Tab2}
\begin{tabular}{ | l | c | l | }
\hline
\multicolumn{1}{|c|}{\rule{0pt}{9pt} \small \centering \text{Parameter}} & \small \centering \text{Symbol} & \multicolumn{1}{c|}{\centering \small \text{Value}}\\
\hline
\hline
\footnotesize Access RF frequency & \footnotesize $f$ & \footnotesize $1.9$\, $[\text{GHz}]$\\
\hline
\footnotesize RF bandwidth & \footnotesize $\text{BW}$ & \footnotesize $40$\, $[\text{MHz}]$ \\
\hline
\footnotesize  Time coherence interval & \footnotesize $\tau$ & \footnotesize $100$\\
\hline
\footnotesize Uplink training time interval & \footnotesize $\tau_p$ & \footnotesize $20$\\
\hline
\footnotesize  Fronthaul FSO wavelength & \footnotesize $\lambda_{am}$  & \footnotesize $1550$\, $[\text{nm}]$\\
\hline
\footnotesize Index of refraction structure & \footnotesize $C_n^2$  & \footnotesize $5\!\times\! 10^{-14} [\text{{$\text{m}^{-\frac{2}{3}}$}}]$\!\!\!\\
\hline
\footnotesize  FSO displacement deviation & \footnotesize $\sigma_s$  & \footnotesize  $30$\,$[\text{cm}]$\\
\hline
\footnotesize  FSO beam radius & \footnotesize $w_z$  & \footnotesize  $2.5$\,$[\text{m}]$\\
\hline
\footnotesize  FSO receiver's radius & \footnotesize $r_a$  & \footnotesize  $10$\,$[\text{cm}]$\\
\hline
\footnotesize Clipping level & \footnotesize $B_c$ & \footnotesize $1$ \\
\hline
\footnotesize Optical RX responsibilities & \footnotesize \big\{\!$R_{\normalfont\text{FSO}}, R_{\normalfont\text{OF}}$\!\big\} & \footnotesize $\big\{0.5, 1\big\}$\\
\hline
\footnotesize AP's maximum TX powers & \footnotesize \big\{\!$P_{\normalfont\text{max}}^{\text{FSO}}, P_{\normalfont\text{max}}^{\text{RF}}$\!\big\} & \footnotesize $\big\{16, 20\big\}\, [\text{dBm}]$\\
\hline
\footnotesize UE's maximum TX powers & \footnotesize $\rho_p$, $\rho_u$ & \footnotesize $100$\,$[\text{mWatt}]$\\
\hline
\footnotesize Circuit powers & \footnotesize \big\{\!$P_{c,m}, P_{c,a}$\!\big\} & \footnotesize $\big\{0.2, 0.5\big\}\, [\text{Watt}]$ \\
\hline
\footnotesize Front/backhaul powers & \footnotesize \big\{\!$P_{\text{fh},m},P_{\text{bh},m}$\!\big\} & \footnotesize $\big\{0.1, 0.5\big\}\, [\text{Watt}]$ \\
\hline
\footnotesize Maximum available power & \footnotesize $P_0$ & \footnotesize $60\, [\text{Watt}]$ \\
\hline
\footnotesize \multirow{4}{*}{Additive noise variances} & \footnotesize $\sigma^2_{p,m}$, $\sigma^2_{u,m}$ & \footnotesize $k_B\!\cdot\! T_0\!\cdot\! \text{BW} \!\cdot\! F$\!\! \\
\cline{2-3}
\footnotesize & \footnotesize \!\!\!$(\phi_{p,am}^\text{FSO})^2$,$(\phi_{u,am}^\text{FSO})^2$ \!\!\!\!\! & \footnotesize $10^{-14}$\, $[\text{A}^2]$ \\
\cline{2-3}
\footnotesize & \footnotesize \!\!\!$(\phi_{p,am}^\text{RF})^2$,$(\phi_{u,am}^\text{RF})^2$ \!\!\!\!\! & \footnotesize $k_B\!\cdot\! T_0\!\cdot\! \text{BW}_m \!\cdot\! F$\!\!\\
\cline{2-3}
\footnotesize & \footnotesize $\phi_{\text{CPU}}^2$ & \footnotesize $10^{-14}$\, $[\text{A}^2]$\\
\hline
\footnotesize Boltzmann constant & \footnotesize $k_B$ & \footnotesize $1.381\!\times\! 10^{-23}$\!\! $[\text{J/K}]$\!\!\!\\
\hline
\footnotesize Noise temperature & \footnotesize $T_0$ & \footnotesize $290$\, $[\text{K}]$\\
\hline
\footnotesize Noise figure & \footnotesize $F$ & \footnotesize $9$\, $[\text{dB}]$\\
\hline
\end{tabular}
\medskip
\end{center}
\end{table}
For the large-scale fading, we have
$
{\beta}_{mk} \!=\! \text{PL}_{mk} \!+ \sigma_{\text{sh}}z_{mk},
$
where PL$_{mk}\, [\text{dB}]$ denotes the path-loss and $\sigma_{\text{sh}}z_{mk}$ is the shadowing with standard deviation ${\sigma}_{\text{sh}}\!=\!8\, [\text{dB}]$ and shadowing correlation factor $z_{mk}\!\sim\! \mathcal{N}(0,1)$ \cite{ngo2017cell}. Employing the conventional three-slope propagation model, we have 
\begin{equation} \label{Eq:EqNum2}
    \text{PL}_{mk} \!= \!\! \begin{cases}
    \!-L\!-\!35 {\log}_{10}(d_{mk})\text{,}&\!\! d_{mk} \!>\! d_1\\
    \!-L \!-\! 15 {\log}_{10}(d_{1}) \!-\! 20 {\log}_{10}(d_{mk})\text{,}&\!\! d_0 \!<\!  d_{mk} \!\leq\! d_1\\
    \!-L\!-\!15 {\log}_{10}(d_{1}) \!-\! 20 {\log}_{10}(d_{0})\text{,}&\!\! d_{mk} \!\leq\! d_0
\end{cases}
\end{equation}
where $d_0\!=\!10\, [\text{m}]$ and $d_1\!=\!50\, [\text{m}]$ are distance references, $d_{mk}$ denotes the distance between the $m$th AP and the $k$th UE, and $L\,[\text{dB}]$ is defined as follows
\begin{align} \label{Eq:EqNum3} \nonumber
    L &= 46.3\!+\!33.9 \, {\log}_{10}(f)\!-\!13.82\, {\log}_{10}(h_{\text{AP}})\\
    &~- (1.1\, {\log}_{10}(f)\!-\!0.7)h_{\text{UE}}\!+\!(1.56\, {\log}_{10}(f)\!-\!0.8),~~~
\end{align}
where $f$ (in $[\text{MHz}]$) represents access RF frequency, $h_{\text{AP}}\!=\!15\, [\text{m}]$ and $h_{\text{UE}}\!=\!1.65\, [\text{m}]$ are the antenna heights of an AP and a UE, respectively. 
Similarly, $\beta_{am}$ is modeled based on (\ref{Eq:EqNum2}) by replacing the $d_{mk}$ with $d_{am}$, $h_\text{AP}$ with $h_\text{AN}\!=\!30\, [\text{m}]$, $h_\text{UE}$ with $h_\text{AP}$, and so forth. The parameters used for the numerical results are summarized in Table~\ref{Table:Tab2}; otherwise, they are clearly mentioned in the paper.

\begin{figure}[t!]
    \centering
    \subfloat{
    \pstool[scale=0.57]{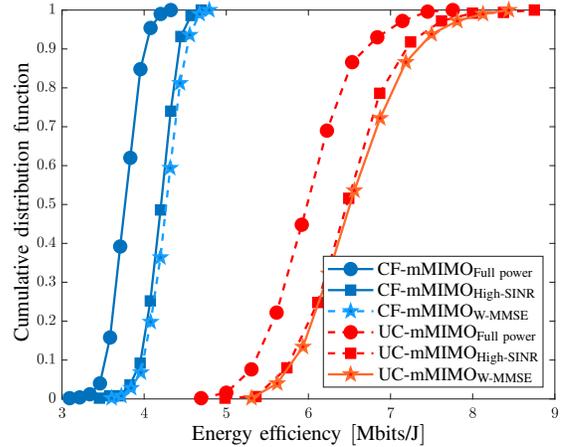}{
    \psfrag{Cumulativedistributionfunction}{\footnotesize \hspace{-0.59cm} Cumulative distribution function}
    \psfrag{Energyefficiency[Mbits/J]}{\footnotesize \hspace{-0.55cm} Energy efficiency $[\text{Mbits/J}]$}
    \psfrag{CF-mMIMO,Full-power123}{\scriptsize \hspace{-0.08cm} $\text{CF-mMIMO}_\text{Full power}$}
    \psfrag{CF-mMIMO,High-SINR}{\scriptsize \hspace{-0.08cm} $\text{CF-mMIMO}_\text{High-SINR}$}
    \psfrag{CF-mMIMO,WMMSE}{\scriptsize \hspace{-0.08cm} $\text{CF-mMIMO}_\text{W-MMSE}$}
    \psfrag{UC-mMIMO,Full-power}{\scriptsize \hspace{-0.08cm} $\text{UC-mMIMO}_\text{Full power}$}
    \psfrag{UC-mMIMO,High-SINR}{\scriptsize \hspace{-0.08cm} $\text{UC-mMIMO}_\text{High-SINR}$}
    \psfrag{UC-mMIMO,WMMSE}{\scriptsize \hspace{-0.08cm} $\text{UC-mMIMO}_\text{W-MMSE}$}
    }}
    \caption{The CDF of EE for $K\!=\!20$, $M\!=\!200$, and $A\!=\!4$. Herein, we assume clear weather and FSO-only fronthauling.}
    \label{Fig:Fig.7}
\end{figure}
Fig.~\ref{Fig:Fig.7} represents the cumulative distribution function (CDF) of the CF- and UC-mMIMO networks' EEs. For depicting this figure, we assume a clear weather condition and FSO-only fronthauling with $\Delta_a\!=\!0$. In both networks, full power allocation mechanisms are compared with the optimal ones based on the high-SINR and W-MMSE solutions. It is concluded that optimally allocating the UEs' and APs' transmission powers enhances the CF- and UC-mMIMO networks' performances with averagely $15\%$ and $8\%$ better EEs, respectively, in comparison to the full power allocations. It is also shown that applying the W-MMSE method, with sensitivity $\varepsilon\!=\!10^{-3}$, offers a bit better performance compared to the high-SINR approximation, at the cost of more complex calculations. Furthermore, the UC-mMIMO network outperforms the CF-mMIMO one with an average $83\%$ higher EE for both full and optimal power allocations. The main reason is that, in the CF-mMIMO network, the APs perform noisy estimations of far UEs' distorted channels and imperfectly decode their received data with low SINRs, unlikely in the UC-mMIMO network.
Fig.~\ref{Fig:Fig.8} is represented to investigate the cognitive fronthaul assignment proposed in Algorithm \ref{Alg:Alg.1} under different weather conditions for the CF- and UC-mMIMO networks. To this end, we consider four scenarios, wherein the numbers of FSO links with good, moderate, and poor alignments are dedicated for each scenario in Table~\ref{Table:Tab3}; scenarios A, B, C, and D. In Fig.~\ref{Fig:Fig.8}, optimal numbers of FSO-only, RF-only, and RF--FSO fronthaul links, for each weather condition and alignment scenario, are derived. It is verified that the RF links replace the FSO ones as the weather condition becomes unfavorable and the number of poorly-aligned links increases. For instance, it is suggested to deploy the fronthaul links using mostly the RF-only technology under the snowy and foggy states, even though the RF bandwidth is split between the access and fronthaul links such that the data rates drop. For intermediate conditions, RF--FSO links are preferred to enhance the networks' EEs. For this figure, the transmission powers at the UEs and APs are optimally allocated by applying the W-MMSE technique.
\noindent
\renewcommand{\arraystretch}{1}
\begin{table}[!t]
\begin{center}
\caption{FSO alignment scenarios with $M\!=\!200$.}\label{Table:Tab3}
\begin{tabular}{ | c | c | c | c || c | c | c | c | }
\hline
\small \!\!Scenario\!\! & \small \!Good\! & \small \!Mod.\! & \small \!Poor\! & \small \!\!Scenario\!\! & \small \!\!Good\!\! & \small \!Mod.\! & \small \!Poor\!\\
\hline
\hline
\small A & \footnotesize $20$ & \footnotesize $20$ & \footnotesize $160$ & \small C & \footnotesize $60$ & \footnotesize $60$ & \footnotesize $80$\\
\hline
\small B & \footnotesize $40$ & \footnotesize $40$ & \footnotesize $120$ & \small D & \footnotesize $80$ & \footnotesize $80$ & \footnotesize $40$\\
\hline
\end{tabular}
\medskip
\end{center}
\end{table}
\begin{figure}[t!]
    \centering
    \subfloat{
    \pstool[scale=0.6]{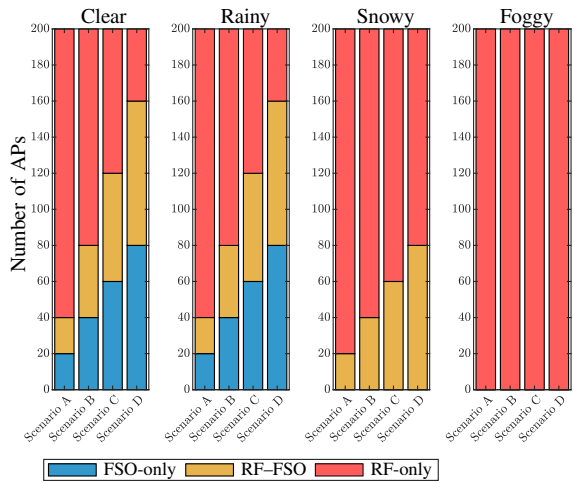}{
    \psfrag{NumberOfAPs}{\footnotesize \hspace{-0.35cm} Number of APs}
    \psfrag{RF/FSO123}{\scriptsize \hspace{-0.05cm} RF--FSO}
    \psfrag{RF1234567}{\scriptsize \hspace{-0.05cm} RF-only}
    \psfrag{FSO1234567}{\scriptsize \hspace{-0.05cm} FSO-only}
    \psfrag{Foggy}{\footnotesize \hspace{-0.13cm} Foggy}
    \psfrag{Snowy}{\footnotesize \hspace{-0.13cm} Snowy}
    \psfrag{Rainy}{\footnotesize \hspace{-0.13cm} Rainy}
    \psfrag{Clear}{\footnotesize \hspace{-0.13cm} Clear}
    }}
    \caption{The cognitive fronthaul assignment with different FSO alignment scenarios and weather conditions. Herein, $K\!=\!20$, $M\!=\!200$, and $A\!=\!4$. }
    \label{Fig:Fig.8}
\end{figure}

In Fig.~\ref{Fig:Fig.9}, the EE of the CF-mMIMO network is investigated and compared for various fronthaul assignment policies and scenario C; FSO-only, RF-only, RF\&FSO, and cognitive assignment. The FSO-only links are sensitive to weather conditions and do not offer high EE in snowy and foggy conditions. Even though the RF-only links are tolerable against adverse weather conditions, they have lower data rates than the FSO-only and RF\&FSO. Besides, the RF\&FSO policy decreases the EE of the network since the same data are sent in a parallel manner over both FSO and RF links with increased consumed powers, although it provides high sum-rates. Thus, applying the cognitive assignment boosts the network's performance for all weather conditions. Comparing with the RF\&FSO policy, the cognitive fronthaul assignment delivers $64\%$, $66\%$, $97\%$, and $198\%$ on average higher EE in the clear, rainy, snowy, and foggy conditions, respectively. Similarly, the results can be extended to other scenarios and also the UC-mMIMO network.
\begin{figure}[t!]
    \centering
    \subfloat{
    \pstool[scale=0.57]{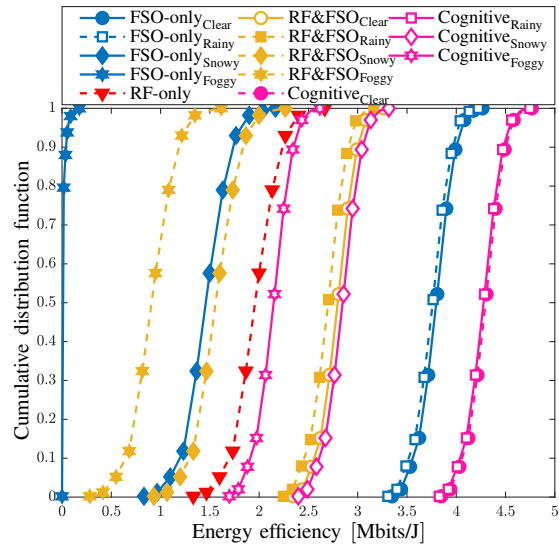}{
    \psfrag{Energyefficiency[Mbits/J]}{\footnotesize \hspace{-0.55cm} Energy efficiency $[\text{Mbits/J}]$}
    \psfrag{Cumulativedistributionfunction}{\footnotesize \hspace{-0.59cm} Cumulative distribution function}
    \psfrag{FSO-only,clear}{\scriptsize \hspace{-0.08cm} $\text{FSO-only}_\text{Clear}$}
    \psfrag{FSO-only,rainy}{\scriptsize \hspace{-0.08cm} $\text{FSO-only}_\text{Rainy}$}
    \psfrag{FSO-only,snowy}{\scriptsize \hspace{-0.08cm} $\text{FSO-only}_\text{Snowy}$}
    \psfrag{FSO-only,foggy}{\scriptsize \hspace{-0.08cm} $\text{FSO-only}_\text{Foggy}$}
    \psfrag{RF/FSO,clear}{\scriptsize \hspace{-0.08cm} $\text{RF\&FSO}_\text{Clear}$}
    \psfrag{RF/FSO,rainy}{\scriptsize \hspace{-0.08cm} $\text{RF\&FSO}_\text{Rainy}$}
    \psfrag{RF/FSO,snowy}{\scriptsize \hspace{-0.08cm} $\text{RF\&FSO}_\text{Snowy}$}
    \psfrag{RF/FSO,foggy}{\scriptsize \hspace{-0.08cm} $\text{RF\&FSO}_\text{Foggy}$}
    \psfrag{RF-only}{\scriptsize \hspace{-0.08cm} $\text{RF-only}$}
    \psfrag{Cognitive,clear}{\scriptsize \hspace{-0.08cm} $\text{Cognitive}_\text{Clear}$}
    \psfrag{Cognitive,rainy}{\scriptsize \hspace{-0.08cm} $\text{Cognitive}_\text{Rainy}$}
    \psfrag{Cognitive,snowy}{\scriptsize \hspace{-0.08cm} $\text{Cognitive}_\text{Snowy}$}
    \psfrag{Cognitive,foggy}{\scriptsize \hspace{-0.08cm} $\text{Cognitive}_\text{Foggy}$}
    }}
    \caption{The CDF of the CF-mMIMO network's sum-SE for different fronthaul assignment policies, wherein $K\!=\!20$, $M\!=\!200$, and $A\!=\!4$.}
    \label{Fig:Fig.9}
\end{figure}


\begin{figure}[t!]
    \centering
    \subfloat{
    \pstool[scale=0.57]{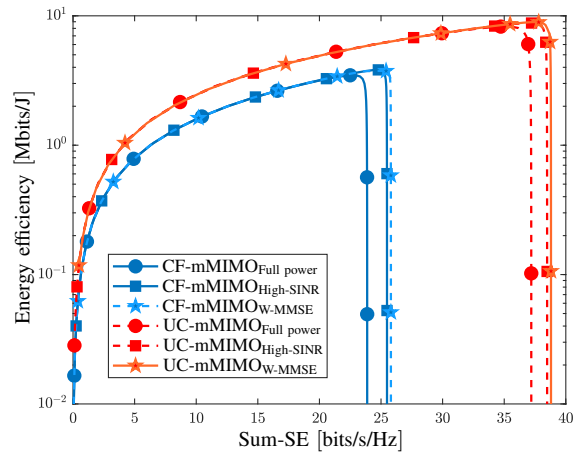}{
    \psfrag{SumSE[bits/s/Hz]}{\footnotesize \hspace{-0.4cm} Sum-SE $[\text{bits/s/Hz}]$}
    \psfrag{Energyefficiency[Mbits/J]}{\footnotesize \hspace{-0.52cm} Energy efficiency $[\text{Mbits/J}]$}
    \psfrag{CF-mMIMO,Full-power123}{\scriptsize \hspace{-0.08cm} $\text{CF-mMIMO}_\text{Full power}$}
    \psfrag{CF-mMIMO,High-SINR}{\scriptsize \hspace{-0.08cm} $\text{CF-mMIMO}_\text{High-SINR}$}
    \psfrag{CF-mMIMO,WMMSE}{\scriptsize \hspace{-0.08cm} $\text{CF-mMIMO}_\text{W-MMSE}$}
    \psfrag{UC-mMIMO,Full-power}{\scriptsize \hspace{-0.08cm} $\text{UC-mMIMO}_\text{Full power}$}
    \psfrag{UC-mMIMO,High-SINR}{\scriptsize \hspace{-0.08cm} $\text{UC-mMIMO}_\text{High-SINR}$}
    \psfrag{UC-mMIMO,WMMSE}{\scriptsize \hspace{-0.08cm} $\text{UC-mMIMO}_\text{W-MMSE}$}
    }}
    \caption{EE versus sum-SE for $K\!=\!20$, $M\!=\!200$, and $A\!=\!4$. Here, we assume clear weather and FSO-only fronthauling.}
    \label{Fig:Fig.10}
\end{figure}
The CF- and UC-mMIMO networks' EEs versus their sum spectral efficiencies (sum-SEs) are illustrated in Fig.~\ref{Fig:Fig.10}. For presenting this figure, we assume a clear weather condition and the FSO-only fronthauling with $\Delta_a\!=\!0$.
This figure verifies the results discussed in Fig.~\ref{Fig:Fig.7} for the CF- and UC-mMIMO networks with full and optimal power allocations based on the high-SINR and W-MMSE solutions. 

\begin{figure}[t!]
    \centering
    \subfloat[]{
    \pstool[scale=0.53]{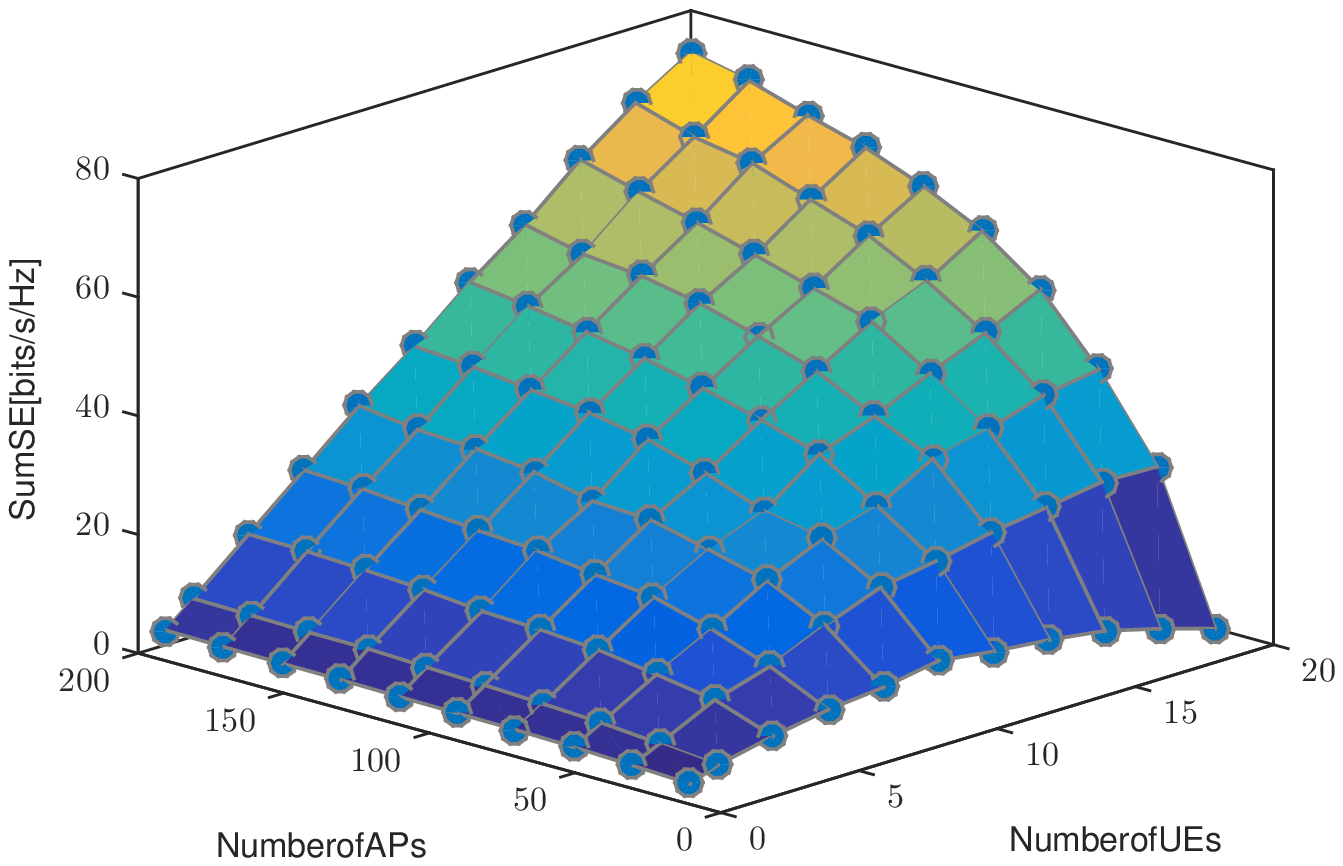}{
    \psfrag{SumSE[bits/s/Hz]}{\footnotesize \hspace{-0.6cm} Sum-SE $[\text{bits/s/Hz}]$}
    \psfrag{NumberofAPs}{\footnotesize \hspace{-0.5cm} Number of APs}
    \psfrag{NumberofUEs}{\footnotesize \hspace{-0.5cm} Number of UEs}
    \psfrag{CF-mMIMO,Full-power123}{\scriptsize \hspace{-0.08cm} $\text{CF-mMIMO}$}
    \psfrag{CF-mMIMO,High-SINR}{\scriptsize \hspace{-0.08cm} $\text{CF-mMIMO}_\text{High-SINR}$}
    \psfrag{CF-mMIMO,WMMSE}{\scriptsize \hspace{-0.08cm} $\text{CF-mMIMO}_\text{W-MMSE}$}
    \psfrag{UC-mMIMO,Full-power}{\scriptsize \hspace{-0.08cm} $\text{UC-mMIMO}$}
    \psfrag{UC-mMIMO,High-SINR}{\scriptsize \hspace{-0.08cm} $\text{UC-mMIMO}_\text{High-SINR}$}
    \psfrag{UC-mMIMO,WMMSE}{\scriptsize \hspace{-0.08cm} $\text{UC-mMIMO}_\text{W-MMSE}$}
    }}
    \hfill
    \subfloat[]{
    \pstool[scale=0.53]{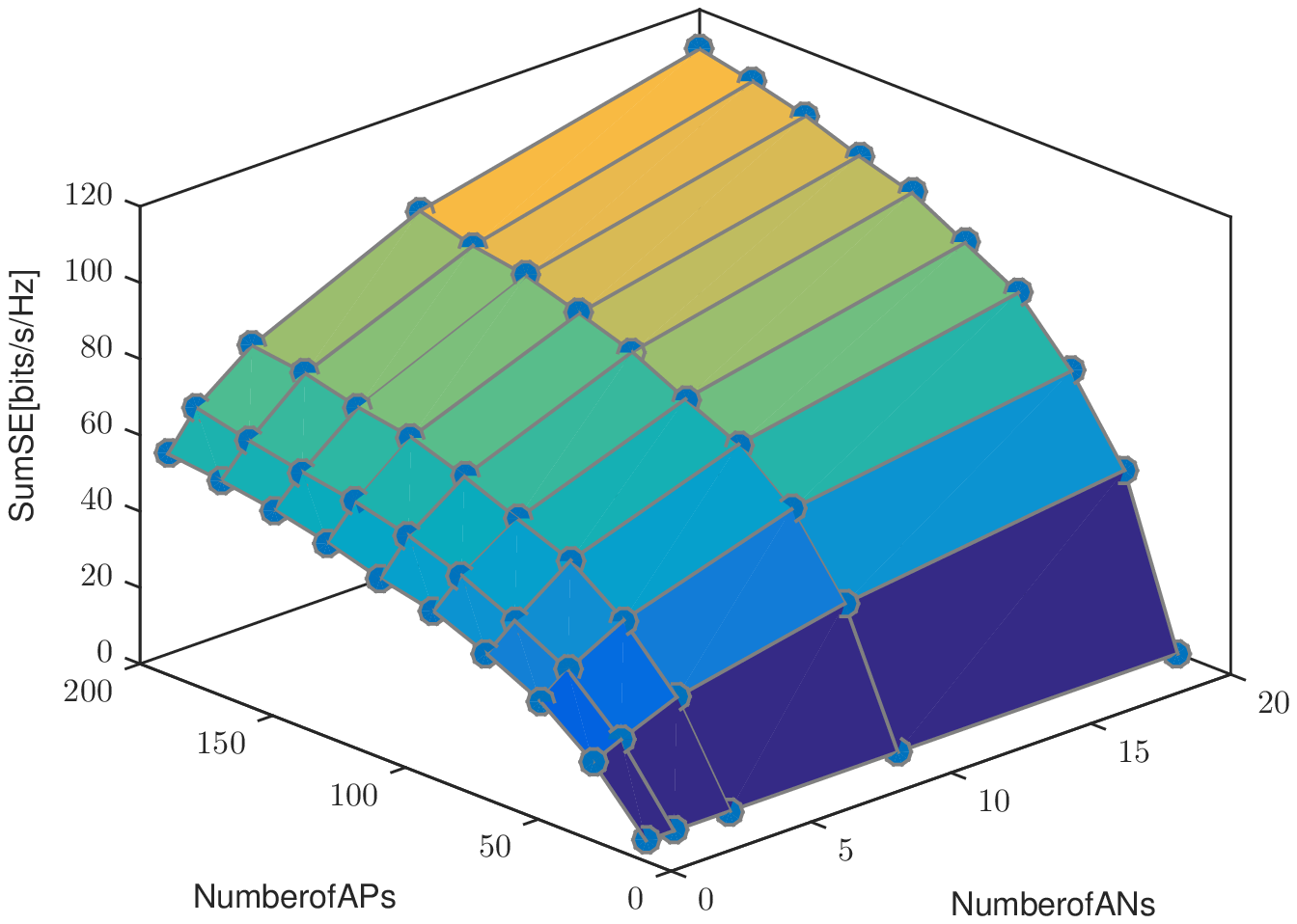}{
    \psfrag{SumSE[bits/s/Hz]}{\footnotesize \hspace{-0.6cm} Sum-SE $[\text{bits/s/Hz}]$}
    \psfrag{NumberofAPs}{\footnotesize \hspace{-0.5cm} Number of APs}
    \psfrag{NumberofANs}{\footnotesize \hspace{-0.5cm} Number of ANs}
    \psfrag{CF-mMIMO,Full-power123}{\scriptsize \hspace{-0.08cm} $\text{CF-mMIMO}$}
    \psfrag{CF-mMIMO,High-SINR}{\scriptsize \hspace{-0.08cm} $\text{CF-mMIMO}_\text{High-SINR}$}
    \psfrag{CF-mMIMO,WMMSE}{\scriptsize \hspace{-0.08cm} $\text{CF-mMIMO}_\text{W-MMSE}$}
    \psfrag{UC-mMIMO,Full-power}{\scriptsize \hspace{-0.08cm} $\text{UC-mMIMO}$}
    \psfrag{UC-mMIMO,High-SINR}{\scriptsize \hspace{-0.08cm} $\text{UC-mMIMO}_\text{High-SINR}$}
    \psfrag{UC-mMIMO,WMMSE}{\scriptsize \hspace{-0.08cm} $\text{UC-mMIMO}_\text{W-MMSE}$}
    }}
    \caption{Sum-SE vs the numbers of the (a) UEs and APs with $A\!=\!4$ and (b) APs and ANs with $K\!=\!20$.}
    \label{Fig:Fig.11}
\end{figure}
Fig.~\ref{Fig:Fig.11}\,(a) analyzes the CF-mMIMO network's performance for variable numbers of the UEs, i.e., $K$, and APs, i.e., $M$, with the fixed number of the ANs, i.e., $A$. It is presented that increasing the $K$ and $M$ continuously enhances the sum-SE to its highest point, then the growth rate declines. As an illustrative example, the sum-SE's growth rate at $M\!=\!60$ is $1.2$ times more than that at $M\!=\!120$ with $K\!=\!10$. Likewise, Fig.~\ref{Fig:Fig.11}\,(b) presents the sum-SE of the CF-mMIMO network versus the $M$ and $A$ for the fixed $K$. Even though the network's performance evolves by increasing the numbers of the APs and ANs, it is saturated for larger parameters. Thus, adding extra APs and ANs does not guarantee better network operation with high energy and cost efficiencies. The same conclusion is also hold for the UC-mMIMO network.

\section{Conclusion} \label{Sec:Sec7}
We studied the uplink of the CF- and the UC-mMIMO networks, wherein the APs are clustered and connected to corresponding ANs via their fronthaul links, and the ANs are connected to the CPU through their fiber backhaul links. For the uplink data transmission, all APs and ANs convert their received signals to be well-matched to the radio or optical links and transmit them to the other ends. After acquiring the CSI at the APs and ANs and sharing them with the CPU, uplink achievable data rates were derived for the CF- and UC-mMIMO networks by applying the MRC and UatF techniques. We formulated an optimization problem to maximize the CF- and UC-mMIMO networks' EEs via optimally allocating the transmission powers at the UEs and APs, subject to maximum transmission and consumed power. Two solutions based on the high-SINR and W-MMSE approaches were proposed to overcome the non-convexity of the optimization problem. Furthermore, a cognitive RF--FSO fronthaul assignment algorithm was suggested to enhance the CF- and UC-mMIMO networks' performances under FSO misalignment and adverse weather conditions. Finally, numerical results were represented to analyze and compare the performances of the CF- and UC-mMIMO networks. It was verified that the UC-mMIMO network overcomes the CF-mMIMO one with $83\%$ on average higher EE. The optimal power allocations also provide $15\%$ and $8\%$ increases in the CF- and UC-mMIMO networks' EEs, respectively. It was shown that even though the RF bandwidth is shared between the access and RF-only or RF--FSO fronthaul links, the cognitive assignment algorithm boosts the CF-mMIMO network's EE up to $198\%$ in unfavorable conditions, compared to the one with FSO-only, RF-only, or RF\&FSO fronthaul links.

\appendices

\section{SINR Derivation}\label{App:App.1}
The SINR of the $k$th UE is defined as 
\begin{equation}\label{Eq:Eq41}
    \text{SINR}_k = \dfrac{|\text{DS}_k|^2}
    {\mathbb{E}\big\{|\text{BU}_k|^2\big\} +\sum\limits_{\substack{k^\prime=1 \\ k^\prime\neq k}}^{K} \mathbb{E}\big\{|\text{IUI}_{k k^\prime}|^2\big\} +\mathbb{E}\big\{|\text{N}_{k}|^2\big\}}.
\end{equation}
After a sequence of mathematical manipulations, the expressions of all terms given in (\ref{Eq:Eq41}) are derived in the following parts.  

\subsection{$\normalfont{\text{DS}}_k$} We consider $\mathbf{g}_{mk} \!=\! \hat{\mathbf{g}}_{mk} \!+\! \mathbf{e}_{mk}$ and $I_{am}^\text{RF} \!=\!  \hat{I}_{am}^\text{RF} \!+\! {e}_{am}$, with $\mathbf{e}_{mk}$ and ${e}_{am}$ representing the estimation errors modeled as zero-mean Gaussian RVs with the variances of $\beta_{mk}-\gamma_{mk}$ and $\beta_{am}-\gamma_{am}^\text{RF}$, respectively, where $\mathbf{e}_{mk} \!\perp\! \hat{\mathbf{g}}_{mk}$ and ${e}_{am} \!\perp\! \hat{I}_{am}^\text{RF}$. It is assumed that the FSO channels are perfectly estimated, i.e., $\Gamma_{am}^2\!=\!\gamma_{am}^\text{FSO}$. However, since the channel coefficients are i.i.d., we attain
\begin{align}\label{Eq:EqAp1} \nonumber
    \text{DS}_k 
    &= \sqrt{\eta_k} \Bigg(\! \sum\limits_{a=1}^{A} \sum\limits_{m\in \mathcal{M}_k(a)} \!\! \epsilon_{am} J_{am} \mathbb{E}\Big\{\!(\hat{I}_{am}^\text{FSO})^*I_{am}^\text{FSO}\!\Big\} \mathbb{E}\Big\{\!\hat{\mathbf{g}}_{mk}^H \mathbf{g}_{mk}\!\Big\}\\ \nonumber
    &~~~+ \sum\limits_{a=1}^{A} \sum\limits_{m\in \mathcal{M}_k(a)} \!\! \epsilon_{am}^\prime J_{am}^\prime \mathbb{E}\Big\{\!(\hat{I}_{am}^\text{RF})^*I_{am}^\text{RF}\!\Big\} \mathbb{E}\Big\{\!\hat{\mathbf{g}}_{mk}^H \mathbf{g}_{mk}\Big\}\!\!\Bigg) \\ 
    &=\sqrt{\eta_k}\sum\limits_{a=1}^{A} \sum\limits_{m\in \mathcal{M}_k(a)} \!\! \bigg(\! \epsilon_{am} J_{am} \gamma_{am}^\text{FSO} +  \epsilon_{am}^\prime J_{am}^\prime \gamma_{am}^\text{RF} \! \bigg) \gamma_{mk}.
\end{align}

\subsection{ $\mathbb{E}\big\{|\normalfont{\text{BU}}_k|^2\big\}$} Since the variance of a sum of independent random variables is equal to the sum of the variances, we have
\begin{align}\label{Eq:EqAp2} \nonumber
   &\mathbb{E}\big\{|\text{BU}_k|^2\big\} = 
    \eta_k \sum\limits_{a=1}^{A} \sum\limits_{m\in \mathcal{M}_k(a)} \!\! \Bigg(\! \epsilon_{am}^2 J_{am}^2 \\ \nonumber
    &~~~\times\!
    \mathbb{E} \Bigg\{\!\bigg| (\hat{I}_{am}^\text{FSO})^* {I}_{am}^\text{FSO} \hat{\mathbf{g}}_{mk}^H {\mathbf{g}}_{mk} \!-\! \mathbb{E}\bigg\{\!(\hat{I}_{am}^\text{FSO})^* {I}_{am}^\text{FSO} \hat{\mathbf{g}}_{mk}^H {\mathbf{g}}_{mk}\bigg\} \bigg|^2\!\Bigg\}\\ \nonumber
    &~~~+ \epsilon_{am}^{\prime 2} J_{am}^{\prime 2}\\ \nonumber
    &\times\!\mathbb{E} \Bigg\{\!\bigg| (\hat{I}_{am}^\text{RF})^* {I}_{am}^\text{RF} \hat{\mathbf{g}}_{mk}^H {\mathbf{g}}_{mk} \!-\! \mathbb{E}\bigg\{\!(\hat{I}_{am}^\text{RF})^* {I}_{am}^\text{RF} \hat{\mathbf{g}}_{mk}^H {\mathbf{g}}_{mk}\!\bigg\} \bigg|^2\!\Bigg\}\!\!\Bigg) \\ \nonumber
    &= \eta_k \sum\limits_{a=1}^{A} \sum\limits_{m\in \mathcal{M}_k(a)} \!\!\! \Bigg(\! \epsilon_{am}^2 J_{am}^2 \\ \nonumber
    &~~~\times\!\Bigg[ \mathbb{E} \bigg\{\! \big|\hat{I}_{am}^\text{FSO}\big|^4\!\bigg\} \mathbb{E} \bigg\{\!\big|\hat{\mathbf{g}}_{mk}^H \big(\hat{\mathbf{g}}_{mk} \!+\! \mathbf{e}_{mk}\big)\big|^2\!\bigg\}\\ \nonumber
    &~~~- \bigg| \mathbb{E}\bigg\{\!\big|\hat{I}_{am}^\text{FSO}\big|^2\!\bigg\} \mathbb{E}\bigg\{\!\hat{\mathbf{g}}_{mk}^H \big(\hat{\mathbf{g}}_{mk} \!+\! \mathbf{e}_{mk}\big)\!\bigg\} \bigg|^2 \Bigg] + \epsilon_{am}^{\prime 2} J_{am}^{\prime 2} \\ \nonumber
    &~~~\times\!\Bigg[ \mathbb{E} \bigg\{\! \big|(\hat{I}_{am}^\text{RF})^* \big(\hat{I}_{am}^\text{RF} \!+\!e_{am}\big)\big|^2\!\bigg\} \mathbb{E} \bigg\{\!\big|\hat{\mathbf{g}}_{mk}^H \big(\hat{\mathbf{g}}_{mk} \!+\! \mathbf{e}_{mk}\big)\big|^2\!\bigg\}\\ \nonumber
    &~~~- \bigg|\mathbb{E} \bigg\{\! (\hat{I}_{am}^\text{RF})^* \big(\hat{I}_{am}^\text{RF} \!+\!e_{am}\big)\!\bigg\} \mathbb{E}\bigg\{\!\hat{\mathbf{g}}_{mk}^H \big(\hat{\mathbf{g}}_{mk} \!+\! \mathbf{e}_{mk}\big)\!\bigg\} \bigg|^2 \Bigg]\Bigg) \\ \nonumber
    &= \eta_k \sum\limits_{a=1}^{A} \sum\limits_{m\in \mathcal{M}_k(a)} \!\! \gamma_{mk} \Bigg(\! \epsilon_{am}^2 J_{am}^2 (\gamma_{am}^\text{FSO})^2 \Big( \gamma_{mk} \!+\! 2\beta_{mk} \Big)\\
    &~~~ + \epsilon_{am}^{\prime 2} J_{am}^{\prime 2} \gamma_{am}^\text{RF} 
    \Big( \gamma_{am}^\text{RF}\beta_{mk} \!+\! \beta_{am} \gamma_{mk} \!+\! \beta_{am} \beta_{mk}\Big) \!\Bigg)\!.
\end{align}

\subsection{ $\mathbb{E}\big\{|\normalfont{\text{IUI}}_{kk^\prime}|^2\big\}$}
Since the pilot sequences are mutually orthogonal, we have 
\begin{align}\label{Eq:EqAp3} \nonumber
    &\mathbb{E}\big\{|\text{IUI}_{kk^\prime}|^2\big\} = \\ \nonumber
    &\eta_{k^\prime} \Bigg(\! 2 \sum\limits_{a=1}^{A} \sum\limits_{m\in \mathcal{M}_k(a)} \!\!\! \epsilon_{am}^2  J_{am}^2 (\gamma_{am}^\text{FSO})^2 \mathbb{E}\Bigg\{\!\bigg| \hat{\mathbf{g}}_{mk}^H \mathbf{g}_{mk^\prime}\bigg|^2\!\Bigg\}\\ \nonumber
    &+ \sum\limits_{a=1}^{A} \sum\limits_{m\in \mathcal{M}_k(a)}\!\! \epsilon_{am}^{\prime 2} J_{am}^{\prime 2} \gamma_{am}^\text{RF}\big(\gamma_{am}^\text{RF} \!+\! \beta_{am}\big)  \mathbb{E}\Bigg\{\!\bigg| \hat{\mathbf{g}}_{mk}^H \mathbf{g}_{mk^\prime} \bigg|^2\!\Bigg\}\!\Bigg)\\ \nonumber
    &=\eta_{k^\prime} \sum\limits_{a=1}^{A} \sum\limits_{m\in \mathcal{M}_k(a)} \!\! \bigg(\! 2\epsilon_{am}^2  J_{am}^2 (\gamma_{am}^\text{FSO})^2\\ \nonumber
    &+ \epsilon_{am}^{\prime 2} J_{am}^{\prime 2} \gamma_{am}^\text{RF}\big(\gamma_{am}^\text{RF} \!+\! \beta_{am}\big) \!\bigg) \zeta_{mk}^2 \Bigg[ \mathbb{E}\Bigg\{\!\bigg|\bm{\varphi}_k^H \bm{\omega}_{p,m}^H \mathbf{g}_{mk^\prime} \bigg|^2\!\Bigg\}\\ \nonumber
    &+ \mathbb{E}\Bigg\{\!\bigg| \sum\limits_{k^{\prime\prime}=1}^{K}  \bm{\varphi}_k^H \bm{\varphi}_{k^{\prime\prime}} \mathbf{g}_{mk^{\prime\prime}}^H \mathbf{g}_{mk^\prime}\bigg|^2\!\Bigg\} \\ \nonumber
    &=\eta_{k^\prime} \sum\limits_{a=1}^{A} \sum\limits_{m\in \mathcal{M}_k(a)} \!\! \beta_{mk^\prime} \bigg(\! 2\epsilon_{am}^2  J_{am}^2 (\gamma_{am}^\text{FSO})^2\\
    &+ \epsilon_{am}^{\prime 2} J_{am}^{\prime 2} \gamma_{am}^\text{RF}\big(\gamma_{am}^\text{RF} \!+\! \beta_{am}\big) \!\bigg)  \bigg(\!  \zeta_{mk}^2 \sigma_{p,m}^2  \!+\! \frac{\gamma_{mk}^2}{\beta_{mk}}\!\bigg).
\end{align}

\subsection{ $\mathbb{E}\big\{|\normalfont{\text{N}}_k|^2\big\}$} Because the additive noise consists of the i.i.d. elements, it can be shown that 
\begin{align}\label{Eq:EqAp4} \nonumber
    &\mathbb{E}\big\{|\text{N}_k|^2\big\}
    =  \sum\limits_{a=1}^{A} \sum\limits_{m\in \mathcal{M}_k(a)} \!\! \epsilon_{am}^2 \gamma_{am}^\text{FSO} \mathbb{E} \bigg\{\! \Big| \hat{\mathbf{g}}_{mk}^H \bm{\Theta}_{u,am}\Big|^2 \! \bigg\}\\ \nonumber
    &\hspace{1.35cm}~+ \sum\limits_{a=1}^{A} \sum\limits_{m\in \mathcal{M}_k(a)}\!\! \epsilon_{am}^{\prime 2} \gamma_{am}^\text{RF} \mathbb{E} \bigg\{\! \Big| \hat{\mathbf{g}}_{mk}^H \bm{\Theta}_{u,am}^\prime \Big|^2 \! \bigg\} \\ 
    &=  \sum\limits_{a=1}^{A} \sum\limits_{m\in \mathcal{M}_k(a)} \!\! \gamma_{mk} \Bigg(\! \epsilon_{am}^2 \gamma_{am}^\text{FSO}  {\Omega}_{u,am}^{2}\!+\! \epsilon_{am}^{\prime 2} \gamma_{am}^\text{RF}  {\Omega}_{u,am}^{\prime2} \!\Bigg)\!.
\end{align}

By inserting (\ref{Eq:EqAp1}), (\ref{Eq:EqAp2}), (\ref{Eq:EqAp3}), and (\ref{Eq:EqAp4}) into (\ref{Eq:Eq41}), with further mathematical manipulations, (\ref{Eq:Eq42}) is obtained.

\section{Mean-Square Error Computation}\label{App:App.2}
The desired signal $s_k$ is decoded by applying a beamforming coefficient $u_k$, given below
\begin{align}\label{Eq:EqEqAp5}
    \hat{s}_k = u_k r_{u,k}.
\end{align}
Therefore, the mean-square error of the decoding is computed as follows
\begin{align}\label{Eq:EqEqAp6} \nonumber
    &e_k=\mathbb{E}\Big\{\big|\hat{s}_k - s_k \big|^2 \Big\}=\mathbb{E}\Big\{\big|u_kr_{u,k}- s_k \big|^2 \Big\}=\\\nonumber
    &~\mathbb{E}\Bigg\{ \bigg| u_k \sqrt{\eta_k} \sum\limits_{a=1}^{A} \sum\limits_{m\in \mathcal{M}_k(a)}\!\! \epsilon_{am} J_{am} (\hat{I}_{am}^{\text{FSO}})^* I_{am}^{\text{FSO}} \hat{\mathbf{g}}_{mk}^H \mathbf{g}_{mk}\\ \nonumber
    &+ u_k \sqrt{\eta_k}\sum\limits_{a=1}^{A} \sum\limits_{m\in \mathcal{M}_k(a)}\!\! \epsilon_{am}^\prime J_{am}^\prime (\hat{I}_{am}^{\text{RF}})^* I_{am}^{\text{RF}} \hat{\mathbf{g}}_{mk}^H \mathbf{g}_{mk} -1 \bigg|^2 \Bigg\}  \\ \nonumber
    &+ |u_k|^2 \mathbb{E}\Bigg\{ \bigg| \sum\limits_{a=1}^{A} \sum\limits_{m\in \mathcal{M}_k(a)}\!\! \bigg(\!\epsilon_{am} J_{am}  (\hat{I}_{am}^{\text{FSO}})^* I_{am}^{\text{FSO}}\\ \nonumber
    &+ \epsilon_{am}^\prime J_{am}^\prime (\hat{I}_{am}^{\text{RF}})^* I_{am}^{\text{RF}} \!\bigg) \sum\limits_{k^\prime\neq k}^{K} \sqrt{\eta_{k^\prime}} \hat{\mathbf{g}}_{mk}^H \mathbf{g}_{mk^\prime} \bigg|^2 \Bigg\} \\  \nonumber
    &+ |u_k|^2 \mathbb{E}\Bigg\{ \bigg| \sum\limits_{a=1}^{A} \sum\limits_{m\in \mathcal{M}_k(a)}\!\! \epsilon_{am}  (\hat{I}_{am}^{\text{FSO}})^* \hat{\mathbf{g}}_{mk}^H \bm{\Theta}_{u,am} \bigg|^2 \Bigg\}\\
    &+ |u_k|^2 \mathbb{E}\Bigg\{ \bigg| \sum\limits_{a=1}^{A} \sum\limits_{m\in \mathcal{M}_k(a)}\!\! \epsilon_{am}^\prime  (\hat{I}_{am}^{\text{RF}})^* \hat{\mathbf{g}}_{mk}^H \bm{\Theta}_{u,am}^\prime \bigg|^2 \Bigg\}.
\end{align}
After some algebraic manipulations, (\ref{Eq:EqWMMSE2}) is obtained.

\balance
\bibliographystyle{IEEEtran}
\bibliography{References.bib}

\end{document}